\definecolor{BrickRed}{rgb}{0.9,0,0}
\definecolor{RoyalBlue}{rgb}{0,0.5,0.7}
\definecolor{Gray}{rgb}{0.3,0.3,0.6}
\newtheorem{thm}{Theorem}
\newtheorem{lemma}[thm]{Lemma}
\newtheorem{prop}[thm]{Proposition}
\newtheorem{problem}{Problem}
\newtheorem{remark}{Remark}
\newcommand{\cK}{{\mathcal K}}
\newcommand{\mC}{{\mathbb C}}
\newcommand{\mJ}{{\mathbb J}}
\newcommand{\bJ}{{\mathbf J}}
\newcommand{\bPi}{{\mathbf \Pi}}
\newcommand{\la}{{z}}
\newcommand{\cH}{{\mathcal H}}
\newcommand{\bR}{{\mathbb R}}
\newcommand{\mD}{{\mathbb D}}
\newcommand{\bZ}{{\mathbb Z}}
\newcommand{\cP}{{\mathcal P}}
\newcommand{\cQ}{{\mathcal Q}}
\DeclareMathOperator{\tr}{tr}
\DeclareMathOperator{\range}{range}
\DeclareMathOperator{\argmin}{argmin}
\newcommand{\E}{\operatorname{E}}
\begin{document}

\title{Likelihood Analysis of Power Spectra\\
and Generalized Moment Problems\thanks{Supported in part by the
NSF under Grant ECCS-1509387,
the AFOSR under Grants FA9550-12-1-0319 and FA9550-15-1-0045,
the Vincentine Hermes-Luh Chair, 
the Swedish Research Council (VR) and the Foundation for Strategic Research (SSF).}}
\author{Tryphon T. Georgiou,~\IEEEmembership{Fellow,~IEEE} and Anders Lindquist,~\IEEEmembership{Life Fellow,~IEEE}
\thanks{T.T.\ Georgiou is with the Department of Electrical \& Computer Engineering,
University of Minnesota, Minneapolis, Minnesota, email: {\tt tryphon@umn.edu}; and A.\ Lindquist is with the Department of  Automation and the School of Mathematical Sciences, Shanghai Jiao Tong University, Shanghai, China, and with the Center for Industrial and Applied Mathematics and the Department of Mathematics,
KTH Royal Institute of Technology, Stockholm, Sweden, email: {\tt alq@kth.se}}
} 

\date{}
\maketitle
\begin{abstract}
We  develop an approach to
spectral estimation that has been advocated by Ferrante, Masiero and Pavon \cite{FMP} and, in the context of the scalar-valued covariance extension problem, by Enqvist and Karlsson \cite{EK}. The aim is to determine the power spectrum that is consistent with given moments and minimizes the relative entropy between the probability law of the underlying Gaussian stochastic process to that of a prior. The approach is analogous to the framework of earlier work by Byrnes, Georgiou and Lindquist and can also be viewed as a generalization of the classical work by Burg and Jaynes on the maximum entropy method. In the present paper we present a new fast algorithm in the general case (i.e., for general Gaussian priors)
and show that for priors with a specific structure the solution can be given in closed form.
\end{abstract}

\section{Introduction}

Consider a stationary, vector-valued, discrete-time, zero-mean, Gaussian stochastic process $\{y(t) \mid t\in\bZ\}$, where $y(t)\in\bR^m$, and $\bZ$, $\bR$ are the sets of integers and reals, respectively. We denote the corresponding probability law (on sample paths of the process) by $\cP$ \cite[Chapter 1]{pinsker} and
the power spectral density, which we assume exists, by $\Phi(e^{i\theta})$, $\theta\in[0,2\pi)$. Further, we assume that the stochastic process is nondeterministic in that the entropy rate is finite,
\[
\int_{-\pi}^\pi \log\det\Phi(e^{i\theta})d\theta < \infty.
\]
We study the basic problem to estimate $\Phi$ from sample-statistics of $\{y(t)\}$. Following \cite{FMP}, we view this problem in a large-deviations framework where a prior law $\cQ$ is available, and where this law corresponds to a power spectral density $\Psi$ with finite entropy rate. We postulate that available  sample-statistics of the process are not consistent with the prior law $\cQ$, and therefore we seek the law $\cP$ that {\em is} consistent with the available statistics and is the closest such law to the prior in the sense of large deviations, that is, $\cP$ is such that the Kullback-Leibler (KL) divergence between $\cP$ and $\cQ$ is minimal.
The KL divergence between the two laws is precisely the Itakura-Saito distance between the corresponding power spectra, which was considered in \cite{EK} for the special case of covariance extension for scalar-valued time series.

The theme of the approach, namely, to obtain power spectra that are consistent with given statistics and optimal with respect to a suitable criterion, is a standard recurring theme in works going back to Burg \cite{burg}. Statistics amount to (generalized) moment constraints and, in the past thirty or so years, a rich theory emerged that made contact with analytic function theory and the classical moment problem, see \cite{landau}--\cite{KLR} and the references therein. 
 A detailed and rigorous exposition of related topics and ideas in Signal Processing is given in \cite{stoicamoses}.

Initially, following Burg, early researchers focused on the entropy rate as such a suitable functional. The entropy rate relates to the variance of one-step-ahead linear prediction and the problem reduces to solving a linear set of equations--the normal equations \cite{makhoul}.  In the context of autoregressive modeling these are solved by the Levinson algorithm. To a large degree, research in the 80's was driven by problems in antenna arrays and, in particular, by interest in finding ``directions of arrival'' \cite{LangMcClellan1983,stoicali3,DOA1,DOA2,geo1} and the analysis of multidimensional processes \cite{lang1982multidimensional}. Two related themes emerged. First, power spectra were sought that consist of ``spectral lines'' (Pisarenko decomposition\footnote{\cite[Chapter 4.5]{stoicamoses}; see 
\cite{geo4,geo5,Stoica} for multivariable generalizations.}) and then, estimates of power over narrow frequency range was considered as a surrogate for power spectral density (Capon method\footnote{\cite[Chapter 5.4]{stoicamoses}; see 
\cite{geo2} for points of contact to the problem of moments.}).
Subsequent developments viewed spectral estimation as an inverse problem to achieve consistency with estimated statistics. Initial motivation was provided by a question of R.E. Kalman to identify spectra of low complexity  \cite{kalman}. %
Early results were obtained using topological and homotopy methods and the complete parametrization of solutions with generic minimal degree was formulated in steps in \cite{thesis,geo0} and \cite{BGuLM}. Subsequently, it was discovered that optimizers of
weighted entropy-like functionals (KL-divergence between power spectra as well as various types of distance to priors \cite{basseville}) had a particularly nice structure; they were rational and had small dimension \cite{BGuL}--\cite{KLR}, \cite{Karlson_Enqvist,continuous_input_to_state,THREE}.
In fact, it turned out that suitably specified weighted entropy functionals contained the precise degrees of freedom that were needed to efficiently parametrize and construct these generic minimal degree solutions \cite{BGuL,SIGEST,BEL,BGL,BGLM}. The mathematical underpinnings of this latter theory were largely based on optimization and duality. The present work is similar in spirit and  technique but differs substantially in the choice of functional and interpretation.

Following \cite{FMP,EK}, we consider the KL-divergence between {\em Gaussian probability laws} of stochastic processes or, equivalently, the Itakura-Saito distance between their power spectra. The interpretation as well as the structure of optimizers have subtle differences from earlier constructions.  For one thing, the use of the KL-divergence in this way has a very natural and appealing 
interpretation: the sought power spectra represent {\em the most likely statistical signature on the path space of a time series that is in agreement with the estimated sample statistics} (see Section \ref{sec:IIa}). The structure of solutions retains many of the attractive features of earlier works. In particular, it ensures  
reasonably good bounds on the dimensionality of modeling filters (see Remark \ref{remark3}). 

Below, we begin in Section \ref{sec:preliminaries} by discussing in some detail the motivation for choosing the particular functional to guide identifying suitable power spectra that reproduce sample statistics. We then explain how sample statistics impose moment constraints on sought power spectra. Finally we expand on a versatile structure for underlying dynamics (input-to-state filters) that allows an effective solution of the moment problems that arise in spectral analysis. In Section~\ref{sec:geometry} we present a geometric framework for input-to-state filters that provides basic tools for building a fast algorithm to solve the basic estimation problem.  Section \ref{sec:problem} gives the problem formulation and 
presents the main results. The fast algorithm is presented. For certain priors, we provide closed-form solutions, which correspond to autoregressive models in the case of trigonometric moment problems and all-pole priors.  The results are presented for multivariable time series and the moment problems for the corresponding matricial power spectra. 
Section~\ref{sec:example} provides a simple example and connections with earlier literature. Proofs of the main results are given in Sections \ref{sec:dual}-\ref{sec:closedform}. In particular, Section \ref{sec:fast} is devoted to deriving the fast algorithm and Section \ref{sec:closedform} to deriving the closed-form solution, respectively. In the concluding Section~\ref{sec:conclusions} we provide some final thoughts.

\section{Preliminaries}\label{sec:preliminaries}

\subsection{Likelihood framework}\label{sec:IIa}
The rationale for the framework adopted herein has been used to justify maximum likelihood methods \cite{JohnsonShore,Ellis,Georgii} and complements the original reasoning by E.T.~Jaynes \cite{Jaynes1,Jaynes2,Jaynes3}. It can be presented as follows. If sample paths of a time series are drawn out of the given prior $\cQ$, they have a small probability of giving rise to sample statistics that are not consistent with $\cQ$. 
If that were to happen, and thereby the sample paths represent a rare event, i.e., a departure from what is expected, one is motivated to seek out of the many possible sample-path distributions that are consistent with the observed statistics the one that is most likely. It is known that, asymptotically, the probability of rare events that suggest an (empirical) distribution $\cP$ depends exponentially on the KL divergence between the prior $\cQ$ and $\cP$.

The KL divergence between two laws $\cP$ and $\cQ$ is
\begin{equation}\label{eq:limit}
D(\cP\|\cQ)= \lim_{N\to \infty}
\frac{1}{2N+1} D(\cP|_{[-N,N]}\|
\cQ|_{[-N,N]}),
\end{equation}
where $\cP|_{[-N,N]}$ denotes the restriction of $\cP$ to the subset of random variables 
\[
\{y(-N),\ldots,y(-1),y(0),y(1),\ldots, y(N)\}
\]
and similarly for $\cQ|_{[-N,N]}$.
In turn, the KL divergence between the finite-dimensional probability densities $p(y(-N),\ldots, y(N))$ and $q(y(-N),\ldots, y(N))$,
corresponding to $\cP|_{[-N,N]}$ and $\cQ|_{[-N,N]}$, is
\[
\int_{\bR^{2N+1}} p \log(q/p)\;dy(-N)\cdots dy(N).
\]
Provided both laws represent purely nondeterministic processes, as is assumed herein, the limit in \eqref{eq:limit} exists. Using Szeg\"o-Wiener-Masani's formula (see e.g., \cite[Lemma 5.1]{Anderson etal.}, \cite[formula (E.12)]{SvS}, \cite[Theorem 11.3.5]{LPbook}), $D(\cP\|\cQ)$ can be 
expressed in terms of the corresponding power spectral densities as follows
\begin{align}
\label{eq:KLcriterion}
D(\cP\|\cQ)&=\frac{1}{4\pi}\int_{-\pi}^\pi \tr\left(\Phi\Psi^{-1}-\log\Phi\Psi^{-1} -I\right) d\theta\notag\\
&=: \mD(\Phi\|\Psi),
\end{align}
where $\tr(\cdot)$ denotes trace.
Since $\cP$ is completely specified by $\Phi$ we only need to determine $\Phi$, based of course on empirical statistics. Thus, we are interested in determining a power spectral density $\Phi$  that is consistent with given statistics and minimizes $\mD(\Phi\|\Psi)$ for a given power spectrum $\Psi$.  The precise formulation of the problem requires expressing statistics in terms of power spectra which is done next. The problem is stated precisely in Section \ref{sec:problem}.

\subsection{Filter banks and statistics}
Time-series represent samples of a stochastic process, and available statistics consist of sample covariances. We now explain the setting and nature of the covariance data.

In time-series analysis as well as in antenna array processing it is customary to assume that recorded data is scaled by a frequency-dependent vector/matrix-valued gain $G(e^{i\theta})$ where the frequency $\theta$ corresponds to time, space, angle, or even a vector-valued combination, see e.g., \cite[Chapter 6]{stoicamoses}, \cite{stoicali1,stoicali2}, \cite[Section II-B]{geo6}. For instance, a window of observations $\{y(k),\,y(k-1),\ldots,\,y(k-n)\}$ of a time series can be thought as the vectorial output of a ``tapped delay line'' represented by the vector-valued Fourier vector $\left[\begin{matrix}1& e^{-i\theta} & \ldots &e^{-in\theta}\end{matrix}\right]'$, i.e., the Fourier vector is the transfer function of the tapped delay line. Likewise, in the array processing literature, a model of an equispaced array of $n+1$ omnidirectional sensors registering signals that are emitted from afar is again the same Fourier vector \cite[Section 6]{stoicamoses}.
Such a vector-valued gain $G$, for general arrays, is often referred to as the {\em array manifold} and can be thought as a bank of filters that capture the relative dependence of the sensor outputs to signals from afar (see Fig.\  1).
Often, for a large equispaced array of sensors, a smaller output is selected that corresponds to $G$ being a linear combination of Fourier components (beamspace techniques)
\begin{equation}\label{eq:Fourier}
G(e^{i\theta})=M\left[\begin{matrix}1\\ e^{-i\theta}\\ \vdots \\e^{-i(n-1)\theta}\end{matrix}\right],
\end{equation}
for a suitable matrix $M$.
Other times, processing of time series or sensor-array data involves a suitably designed bank of filters $G_k(e^{i\theta})$, $k=1,2,\ldots,n$,
\begin{center}
\setlength{\unitlength}{.01in}
\parbox{304\unitlength}
{\begin{picture}(304,150)
\thicklines
\put(120,105){\framebox(50,30){$G_1$}}
\put(120,65){\framebox(50,30){$G_2$}}
\put(120,15){\framebox(50,30){$G_n$}}

\put(90,120){\vector(1,0){30}}
\put(90,80){\vector(1,0){30}}
\put(90,30){\vector(1,0){30}}

\put(90,30){\line(0,1){90}}
\put(80,75){\line(1,0){10}}

\put(170,120){\vector(1,0){30}}
\put(170,80){\vector(1,0){30}}
\put(170,30){\vector(1,0){30}}

\put(145,60){\circle{1}}
\put(145,55){\circle{1}}
\put(145,50){\circle{1}}

\put(185,130){\makebox(0,0){{\large $x_1$}}}
\put(185,90){\makebox(0,0){{\large $x_2$}}}
\put(185,40){\makebox(0,0){{\large $x_n$}}}

\put(70,75){\makebox(0,0){{\large $y$}}}
\end{picture}
}

Figure 1: Bank of filters\phantom{xxx}
\end{center}
in which case
\[
G(e^{i\theta})=\left[\begin{matrix}G_1(e^{i\theta}) & G_2(e^{i\theta}) & \ldots & G_n(e^{i\theta})\end{matrix}\right]'
\]
with $\{y(t)\}$ the common input and general dynamics, see e.g.\ \cite{PPV,BGL2}. The filters may also encapsulate attenuation from the coordinate $\theta$ of ``sources'' generating $\{y(t)\}$ to the respective outputs of sensor array (cf.\ \cite[Section II]{geo6}).
In all these cases, it is natural to estimate covariance of the vectorial time series
\[
x(t)=\left[\begin{matrix}x_1(t) & x_2(t) & \ldots & x_n(t)\end{matrix}\right]'.
\]
This is typically the form of available statistics that we consider henceforth.

We assume that $G$ is a square-integrable, stable $n\times m$ transfer function. Then, the $n$-dimensional output process $\{ x(t)\mid t\in\bZ\}$ 
assumes a representation as a stochastic integral
 \begin{displaymath}
x(t)=\int_{-\pi}^\pi e^{-it\theta}G(e^{i\theta})\underbrace{\,W(e^{i\theta})\,d\hat{w}(\theta)\,}_{d\hat y(\theta)},
\end{displaymath}
where $\hat{w}$ is a Wiener process such that $\E\{d\hat{w}d\hat{w}^*\}=Id\theta/2\pi$. Here, $I$ is the identity matrix, $\E\{\;\}$ the expectation operator, and $W$ is a (minimum-phase) spectral factor of $\Phi$, i.e., $W(e^{i\theta})W(e^{i\theta})^*=\Phi(e^{i\theta})$, and therefore $d\hat y$ is the stochastic Fourier transform of $y$; see  e.g.\ \cite[Chapter 3]{LPbook}. 
It follows that the covariance of the (zero-mean) vectorial output $x(t)$ is
\begin{equation}\label{eq:moment}
\Sigma:= \E\{x(t)x(t)\}=\int G\Phi G^* 
\end{equation}
where, for economy of notation, we have suppressed the limits of integration and the normalized Lebesgue measure $d\theta/2\pi$, i.e., $\int$ denotes $\int_{-\pi}^\pi \frac{d\theta}{2\pi}$.
The value $\Sigma$ represents a matricial moment constraint on $\Phi$.
The problem that we consider below is, given $G$ and $\Sigma$, to determine suitable $\Phi$
satisfying \eqref{eq:moment}. 

\subsection{Input-to-state filters}
A special case of a filter bank of great interest is when this represents an input-to-state (stable) linear system
\begin{equation}\label{eq:I2S}
x(t+1)=Ax(t)+By(t), \;t\in\bZ,
\end{equation}
where $A\in\bR^{n\times n}$ and $B\in\bR^{n\times m}$. In that case, the transfer function of the
filter bank is
\begin{equation}
\label{eq:Grational}
G(z)=z(zI-A)^{-1}B.
\end{equation}
Throughout we assume that all the eigenvalues of $A$ are located in the open unit disc. Then 
\begin{displaymath}
\begin{split}
G(z)&=(I-z^{-1}A)^{-1}B\\
&= B+ABz^{-1}+A^2Bz^{-2}+A^3Bz^{-3}+\dots 
\end{split}
\end{displaymath}
for all $z$ such that $|z|>1$. Throughout, to insure that the complete state space is being reached and to avoid trivialities we assume that $(A,B)$ is a reachable pair, i.e.,
\begin{equation}
\nonumber 
\mbox{\rm rank\,} [B,AB,\cdots,A^{n-1}B] =n,
\end{equation}
and that $B$ is full column rank.
The use of such filter banks is the basis of a tunable method of spectral analysis that was introduced in \cite{BGL2} and is referred to as THREE.

The input-to-state structure in \eqref{eq:Grational} encompasses
Fourier vectors where $G_k(z):=z^{-(k-1)}$, $k=1,2,\dots,n$, in which case 
\begin{equation}
\label{eq:ABtoeplitz}
A=\begin{bmatrix}0&0&\cdots&0&0\\1&0&\cdots&0&0\\
0&1&\cdots&0&0\\
\vdots&\vdots&\ddots&\vdots&\vdots\\
0&0&\cdots&1&0\end{bmatrix},
\quad
B=\begin{bmatrix}1\\0\\\vdots\\0\\0\end{bmatrix}
\end{equation}
and the state covariance is Toeplitz, i.e.,
 \begin{equation}\label{eq:toeplitz}
 \Sigma:=E\{x(t)x(t)'\}=\big[c_{k-\ell}\big]_{k,\ell=1}^n
 \end{equation}
where $c_k:=\E\{y(t+k)y(t)\}$. Identifying a power spectral density $\Phi$ that is consistent with $\Sigma$ and the process model is precisely the problem that underlies subspace identification \cite{LPbook} and coincides with
the classical ``covariance extension'' or ``trigonometric moment'' problem.

On the other hand, first-order filters $G_k(z):=\frac{z}{z-p_k}$, 
$k=1,2,\dots,n$, (with $p_k\neq p_\ell$ for $k\neq \ell\,$) lead to
\begin{equation}
\label{eq:ABn-p}
A=\begin{bmatrix}p_1& & &\\ &p_2&&\\ & &\ddots&  \\ & & & p_n\end{bmatrix},
\quad
B=\begin{bmatrix}1\\1\\\vdots\\1\end{bmatrix}\, 
\end{equation}
and a state covariance matrix $\Sigma$ that has the structure of a Pick matrix; see \cite{GL1}.  

Finally, it is also seen that \eqref{eq:Grational} is of the form \eqref{eq:Fourier} where $M=[B,\,AB,\,\ldots]$. This matrix is finite when $A$ is nilpotent corresponding to ``moving average'' dynamics. 

\section{Geometry of input-to-state filters}\label{sec:geometry}
The (rational) input-to-state structure of $G(z)$ in \eqref{eq:Grational} imposes structural algebraic constraints on the covariance of $x(t)$. In addition to positive definiteness,
$\Sigma$ is completely characterized by belonging to the range of the integral operator
\begin{equation}
\label{Gamma}
\Gamma\, :\, \Phi\mapsto \Sigma=\int G\Phi G^*.
\end{equation}
This is a linear operator that takes $m\times m$ integrable matrix-valued functions $\Phi$ on the unit circle to symmetric matrices $\Sigma$.

The range of $\Gamma$ admits an algebraic characterization. In fact, it is shown in \cite{Georgiou2002} that  a symmetric $n\times n$ matrix $\Sigma$ belongs to $\range(\Gamma)$ if and only if 
\begin{equation}
\label{eq:Sigma}
\Sigma- A\Sigma A' =BH+H'B' 
\end{equation}
for some $m\times n$ matrix $H$. Equivalently,
\begin{equation}
\label{eq:Sigma2}
{\rm rank}\,\left[\begin{matrix}\Sigma- A\Sigma A' & B\\B' & 0\end{matrix}\right] ={\rm rank}\,\left[\begin{matrix}0 & B\\B' & 0\end{matrix}\right],
\end{equation}
where $0$ denotes a zero-matrix of appropriate size, 
is necessary and sufficient for solvability of \eqref{eq:Sigma}.
Moreover, there is a coercive, continuous spectral density $\Phi$ satisfying the generalized moment condition \eqref{eq:moment} if and only if $\Sigma$ is positive definite\footnote{The case where $\Sigma$ is only nonnegative definite is discussed fully in \cite{geo5}. In that case the spectral content may correspond to a singular spectral measure.} and satisfies \eqref{eq:Sigma} or, the equivalent condition \eqref{eq:Sigma2}.

The adjoint operator $\Gamma^*$ maps symmetric matrices into $m\times m$ integrable Hermitian matrix-valued functions on the unit circle, namely
\[\Gamma^*\,:\, \Lambda \mapsto G^*\Lambda G.
\]
The inner product in these two spaces, symmetric matrices and integrable Hermitian matrix-valued functions on the unit circle, relate as
\begin{align*}
\langle \Lambda, \Sigma\rangle&:=\tr( \Lambda \Sigma)\\
&=\tr\int G^*\Lambda G \Phi\\
&=:\langle G^*\Lambda G, \Phi\rangle.
\end{align*}

We also consider the operator
\[
\Theta\,:\, H \mapsto \Delta=BH+H'B'
\]
which maps $\bR^{m\times n}$ to symmetric $n\times n$ matrices and its adjoint
\begin{equation}
\label{eq:Thetastar}
\Theta^*\,:\, \Delta \mapsto 2B'\Delta.
\end{equation}
We are interested in non-redundant representations of $\range(\Gamma)$ and $\range(\Gamma^*)$ by identifying the minimal degrees of freedom in suitable matrix representations. The first proposition deals with $\range(\Gamma)$.

\

\begin{prop}\em\label{prop:bijection1} The map
\begin{subequations}\label{eq:HY}
\begin{equation}
\label{eq:H1}
\Sigma\mapsto H=(B'B)^{-1}\left[B'(\Sigma- A\Sigma A')-YB'\right]
\end{equation}
where $Y$ is the symmetric solution of the Lyapunov equation
\begin{equation}\label{eq:YLyapunov1}
(B'B)Y+Y(B'B)=B'(\Sigma- A\Sigma A') B
\end{equation}
establishes a bijective correspondence between
$\Sigma\in\range(\Gamma)$ and $H\in\range(\Theta^*)$.
\end{subequations}
\end{prop}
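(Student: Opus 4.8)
The plan is to recognize the asserted map as the device that selects, among all $H$ solving the defining relation $\Sigma-A\Sigma A'=BH+H'B'$, the unique representative lying in $\range(\Theta^*)$. Two structural facts drive the argument. First, $\range(\Theta^*)=(\ker\Theta)^\perp$, so a solution of $\Theta H=\Delta$ (where $\Delta:=\Sigma-A\Sigma A'$) that happens to lie in $\range(\Theta^*)$ is automatically the \emph{unique} such solution: any two differ by an element of $\ker\Theta$ which, lying also in $(\ker\Theta)^\perp$, must vanish. Second, since all eigenvalues of $A$ lie inside the unit disc, the Stein operator $\Sigma\mapsto\Sigma-A\Sigma A'$ is a bijection on symmetric matrices; together with \eqref{eq:Sigma} this identifies $\range(\Gamma)$ with $\range(\Theta)$ through $\Sigma\leftrightarrow\Delta$. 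It then suffices to check that the explicit formula \eqref{eq:HY} produces, for each admissible $\Sigma$, a matrix $H$ that (i) solves $\Theta H=\Delta$ and (ii) belongs to $\range(\Theta^*)$.

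For (i), I would introduce the orthogonal projection $\Pi:=B(B'B)^{-1}B'$ onto $\range(B)$ and substitute \eqref{eq:H1} into $BH+H'B'$. A direct computation, using only the definition of $H$, gives
\[
BH+H'B'=\Pi\Delta+\Delta\Pi-B(B'B)^{-1}\big[(B'B)Y+Y(B'B)\big](B'B)^{-1}B'.
\]
Invoking the Lyapunov equation \eqref{eq:YLyapunov1} collapses the bracketed term to $B'\Delta B$, so that $BH+H'B'=\Pi\Delta+\Delta\Pi-\Pi\Delta\Pi$, which equals $\Delta$ exactly when $(I-\Pi)\Delta(I-\Pi)=0$. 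This last identity is precisely where $\Sigma\in\range(\Gamma)$ enters: by \eqref{eq:Sigma} we have $\Delta=BH_0+H_0'B'$ for some $H_0$, and since $(I-\Pi)B=0$ and $B'(I-\Pi)=0$, the congruence annihilates both terms. Hence (i) holds.

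For (ii), I would again use the Lyapunov equation, now to evaluate $HB$: right-multiplying \eqref{eq:H1} by $B$ and substituting \eqref{eq:YLyapunov1} yields the clean identity $HB=Y$, which is symmetric. It then remains to identify $\range(\Theta^*)=\{M\in\bR^{m\times n}:MB=(MB)'\}$. The inclusion $\subseteq$ is immediate, since $\Theta^*\Delta_0=2B'\Delta_0$ gives $(\Theta^*\Delta_0)B=2B'\Delta_0B$, which is symmetric; equality follows by a dimension count, both spaces having dimension $mn-\tfrac12 m(m-1)$ (using that $B$ has full column rank $m$, so that $\ker\Theta^*=\{\Delta_0=\Delta_0':B'\Delta_0=0\}$ has dimension $\tfrac12(n-m)(n-m+1)$). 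Since $HB=Y$ is symmetric, $H\in\range(\Theta^*)$.

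With (i) and (ii) established, the bijection is immediate. Injectivity follows from invertibility of the Stein operator: if two admissible covariances yield the same $H$, then by (i) they yield the same $\Delta$, hence coincide. Surjectivity follows by running the correspondence backwards: given $\bar H\in\range(\Theta^*)$, let $\Sigma$ be the unique solution of $\Sigma-A\Sigma A'=B\bar H+\bar H'B'$, so that $\Sigma\in\range(\Gamma)$; the $H$ produced by \eqref{eq:HY} then satisfies $\Theta H=\Theta\bar H$ with both $H,\bar H\in\range(\Theta^*)$, forcing $H=\bar H$. I expect the main obstacle to be purely computational, namely the careful bookkeeping in (i) leading to $BH+H'B'=\Pi\Delta+\Delta\Pi-\Pi\Delta\Pi$, together with the realization that the Lyapunov equation \eqref{eq:YLyapunov1} is exactly what reduces both this expression and $HB$ to their desired forms; the surrounding bijection argument is routine once the range-of-adjoint viewpoint is adopted.
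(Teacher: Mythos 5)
Your proof is correct, but it reaches Proposition~\ref{prop:bijection1} by a genuinely different route than the paper. The paper does not verify the formula \eqref{eq:HY} after the fact; it \emph{derives} it, by posing the problem of minimizing the Frobenius norm $\|H\|_{\rm F}$ over all solutions of $BH+H'B'=\Delta$. The Lagrangian first-order condition forces the minimizer to have the form $H=B'\Lambda$ with $\Lambda$ symmetric, so membership in $\range(\Theta^*)$ comes for free from \eqref{eq:Thetastar}; symmetry of $Y:=HB$ and the Lyapunov equation \eqref{eq:YLyapunov1} then follow by pre- and post-multiplying \eqref{eq:Sigma} by $B'$ and $B$, and solving for $H$ yields \eqref{eq:H1}. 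You instead take the formula as given and check by direct computation, via the projection $\Pi=B(B'B)^{-1}B'$, that it solves $\Theta H=\Delta$ (your identity $BH+H'B'=\Delta-(I-\Pi)\Delta(I-\Pi)$, with the last term annihilated precisely because $\Sigma\in\range(\Gamma)$) and that $HB=Y$ is symmetric; you then need the extra characterization $\range(\Theta^*)=\{M\in\bR^{m\times n}\mid MB=(MB)'\}$, which you establish by a dimension count. Both arguments settle uniqueness identically, from $\ker\Theta=\left(\range(\Theta^*)\right)^\perp$. Your route buys two things the paper leaves implicit: a proof of the symmetric-$MB$ characterization of $\range(\Theta^*)$, which the paper later invokes without proof (``i.e., $B'X$ is symmetric'') when defining the set $\mathcal{X}_+$, and an explicit treatment of bijectivity --- injectivity and surjectivity through invertibility of the Stein map $\Sigma\mapsto\Sigma-A\Sigma A'$ --- whereas the paper's proof only exhibits, for each $\Sigma$, the existence and uniqueness of the representative $H\in\range(\Theta^*)$. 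What the paper's variational derivation buys in exchange is insight into where the formula comes from: it exhibits \eqref{eq:H1} as the minimum-Frobenius-norm solution of \eqref{eq:Sigma}, a structural fact the paper emphasizes immediately after the proposition and which is not visible from your verification.
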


\

\begin{proof}
Set
$\Delta:=\Sigma-A\Sigma A'$. Since we have $\Sigma\in\range(\Gamma)$, 
\begin{equation}\label{eq:Delta}
\Delta= BH+H'B'
\end{equation}
can be solved for $H\in\bR^{m\times n}$ and $\Delta=\Theta(H)$. We seek a particular solution of minimal Frobenius norm 
\[
\|H\|_{\rm F}:=\sqrt{\tr(HH')}.
\]
Then, this solution will be in $\range(\Theta^*)$, a fact that will be verified below.
The Lagrangian of the problem is
\[
\tr(HH')+2\tr(\Lambda BH)-\tr(\Lambda\Delta)
\]
where $\Lambda=\Lambda'$ is the symmetric matrix-valued Lagrange multiplier. It follows that the unique optimal solution is of the form
\begin{equation}
\label{H=B'Lambda}
H=B'\Lambda,
\end{equation}
and therefore $H\in\range(\Theta^*)$ in view \eqref{eq:Thetastar}.
Then, $HB=B'\Lambda B=:Y$ is symmetric. Further, it satisfies the Lyapunov equation
\begin{equation}\label{eq:YLyapunov}
(B'B)Y+Y(B'B)=B'\Delta B.
\end{equation}
as can be seen by pre-multiplying \eqref{eq:Delta} by $B'$ and post-multiplying by $B$. Since $B$ has full column rank by assumption, the eigenvalues of $B'B$ are positive and \eqref{eq:YLyapunov} has a unique solution $Y$.
By premultiplying  \eqref{eq:Delta} by $B'$ we can now solve for
\begin{equation}
\label{eq:H}
H=(B'B)^{-1}\left(B'\Delta-YB'\right).
\end{equation}
Finally, suppose that \eqref{eq:Sigma} has two solutions $H_1$ and $H_2$ in $\range(\Theta^*)$. Then 
\begin{displaymath}
H_1-H_2\in \ker\Theta =\left(\range(\Theta^*)\right)^\perp ,
\end{displaymath}
and hence $H_1=H_2$, proving uniqueness.
\end{proof}

\

The essence is that
 \eqref{eq:Sigma} has many solutions in general when $m>1$. In that case, $\Theta$ has a non-trivial null space, and Proposition \ref{prop:bijection1} provides the solution to \eqref{eq:Sigma} 
of minimal Frobenius norm.

The next proposition deals with $\range(\Gamma^*)$. 
Since the orthogonal complement of the range of $\Gamma$ is the null space of $\Gamma^*$,
elements in  $\range(\Gamma^*)$ can always be written
in the form $G^*\Lambda G$ where $\Lambda\in \range(\Gamma)$.

\

\begin{prop}\em\label{prop:bijection2} The map
\begin{subequations}
\begin{equation}
\label{eq:}
G^*\Lambda G \mapsto X=MB,
\end{equation}
where $M$ is the unique solution of the Lyapunov equation
\begin{equation}
\label{eq:M1}
M=A'MA + \Lambda 
\end{equation}
establishes a bijective correspondence between
$G^*\Lambda G\in\range(\Gamma^*)$ and $X\in\range(\Theta^*)$.
\end{subequations}
\end{prop}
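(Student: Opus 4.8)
The plan is to reduce everything to a single closed-form identity expressing the function $G^*\Lambda G$ in terms of the finite matrix $X=MB$. First I would record the standing facts. Since all eigenvalues of $A$ lie in the open unit disc, the Stein equation $M=A'MA+\Lambda$ has the unique solution $M=\sum_{k\ge 0}(A')^k\Lambda A^k$; transposing the equation shows $M'$ solves it as well, so by uniqueness $M=M'$ whenever $\Lambda=\Lambda'$. Consequently $X=MB$ is built from a symmetric $M$, and since $\Theta^*\Delta=2B'\Delta$ one has $X'=B'M=\tfrac12\Theta^*(M)$, placing $X$ in $\range(\Theta^*)$ (up to the transpose dictated by the differing shapes of $H\in\bR^{m\times n}$ and $MB\in\bR^{n\times m}$). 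This disposes of the claim that the map takes values in $\range(\Theta^*)$.

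The heart of the argument is the identity
\[
G^*\Lambda G = G^*X + X^*G - B'X, \qquad X=MB.
\]
To derive it I would write $P:=(I-zA')^{-1}$ and $Q:=(I-z^{-1}A)^{-1}$, so that on the unit circle $G^*=B'P$ and $G=QB$, and substitute $\Lambda=M-A'MA$. Using $PA'=z^{-1}(P-I)$ and $AQ=z(Q-I)$ --- both immediate from $P^{-1}=I-zA'$ and $Q^{-1}=I-z^{-1}A$ --- the cross term collapses via $PA'MAQ=(P-I)M(Q-I)$, and a short expansion yields $P\Lambda Q=PM+MQ-M$. Multiplying on the left by $B'$ and on the right by $B$, and using $B'M=(MB)'=X^*$ (here symmetry of $M$ is used), gives the displayed identity. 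This identity is the crux and the step I expect to be the main obstacle, as it is where the specific structure of $G$ and the Stein equation conspire; once it is in hand, the remaining claims are essentially bookkeeping.

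Well-definedness and injectivity then follow simultaneously. The map $G^*\Lambda G\mapsto X$ is well defined precisely when $G^*\Lambda G=0$ forces $X=0$, and it is injective for the same reason, since by the identity $X=0$ conversely forces $G^*\Lambda G=0$. To see that $G^*\Lambda G=0\Rightarrow X=0$, I would expand $G^*X=\sum_{k\ge 0}z^kB'(A')^kX$ and $X^*G=\sum_{j\ge 0}z^{-j}X'A^jB$ and match Fourier coefficients in the identity: the coefficients of the nonpositive powers $z^{-j}$, $j\ge1$, give $X'A^jB=0$, while the $z^0$ term gives $X'B=0$. Hence $X'[\,B,AB,\ldots,A^{n-1}B\,]=0$, and reachability forces $X'=0$, i.e. $X=0$. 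This is the one place where the reachability hypothesis is essential.

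Finally, for surjectivity I would run the Stein bijection backwards. As $\Lambda$ ranges over all symmetric matrices, $M$ ranges over all symmetric matrices, so $X=MB$ ranges over all of $\{MB:M=M'\}$, which is exactly $\range(\Theta^*)$ under the transpose identification above. Concretely, given a target $X=M_0B$ with $M_0$ symmetric, set $\Lambda:=M_0-A'M_0A$; then $\Lambda$ is symmetric, its unique Stein solution is $M_0$, and the map sends $G^*\Lambda G$ to $X$. This establishes surjectivity and completes the bijection.
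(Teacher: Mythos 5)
Your proof is correct, and its centerpiece --- the identity $G^*\Lambda G = G^*X + X'G - B'X$ --- is exactly the paper's identity $G^*\Lambda G = G_0^*X + X'G_0$ with $G_0 = G - \tfrac12 B$, once one uses the symmetry of $B'X = B'MB$; the paper derives it in one line from $AG(z)=z\big(G(z)-B\big)$ and the Lyapunov equation rather than via your resolvent manipulations, but the algebra is the same. Where you genuinely diverge is in how bijectivity is assembled around this identity. The paper fixes the unique representative $\Lambda\in\range(\Gamma)$ of each element of $\range(\Gamma^*)$ (so well-definedness never arises), proves only injectivity --- $X=0$ forces $G^*\Lambda G=0$ by the identity, hence $\Lambda=0$ because $\range(\Gamma)=(\ker\Gamma^*)^\perp$ --- and then obtains surjectivity from a dimension count, $\dim\range(\Gamma^*)=\dim\range(\Gamma)=\dim\range(\Theta^*)$, the last equality coming from Proposition~\ref{prop:bijection1}. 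You instead allow arbitrary symmetric representatives, prove well-definedness by matching Fourier coefficients in the identity and invoking reachability ($X'A^jB=0$ for $j=0,1,\dots,n-1$ forces $X'=0$), get injectivity for free from the identity, and prove surjectivity constructively by running the Stein map backwards ($\Lambda:=M_0-A'M_0A$). Your route is self-contained: it does not lean on Proposition~\ref{prop:bijection1} or on the adjoint-orthogonality argument, and it makes explicit where the reachability hypothesis enters. The paper's route is shorter given that Proposition~\ref{prop:bijection1} is already in place, at the price of burying surjectivity inside a dimension count and never exhibiting the inverse map, which your argument does explicitly.
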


\

\begin{proof}
We first note that the dimension of $\range(\Gamma^*)$, which coincides with the dimension of $\range(\Gamma)$, is equal to the dimension of $\range(\Theta^*)$  by Proposition \ref{prop:bijection1}. Since $M$ is symmetric, it also follows that $X'=B'M\in\range(\Theta^*)$. Thus, in order to establish that the correspondence $G^*\Lambda G\mapsto X'$ is a bijection, it suffices to prove that $X'=0$ only when $\Lambda=0$. To see this note that, since $AG(z)=z\big(G(z) -B\big)$, \eqref{eq:Sigma} yields
\begin{equation}
\label{eq:Lambda2X}
\begin{split}
G^*\Lambda G &= G^*MG -G^*A'MA G\\
                                 &= G^*MG - [G -B]^*M[G -B] \\
                                 &= G_0^*X + X'G_0
\end{split}
\end{equation}
with  $G_0$ given 
\begin{equation}
\label{eq:Gtilde}
\begin{split}
G_0(z)&:=G(z) -\tfrac{1}{2} B = \tfrac12 B +A(zI-A)^{-1}B\\
            &= \tfrac{1}{2} B+ABz^{-1}+A^2Bz^{-2}+A^3Bz^{-3}+\dots .
\end{split}
\end{equation}
But, since $\Lambda\in\range(\Gamma)$, $G^*\Lambda G=0$ only when $\Lambda=0$. Thus, $X=0$ implies that $\Lambda=0$ and this completes the proof.
\end{proof}

\section{Main results}\label{sec:problem}\label{mainresults}

We are now in a position to formulate the main problem that we consider. As noted earlier this problem was first formulated and studied in \cite{FMP}. 

\begin{problem}\label{problem1}
 Given an $m\times m$ matrix-valued power spectral density $\Psi$, and given the parameters $A,B$ of the input-to-state filter (filter bank) in \eqref{eq:I2S} and the covariance $\Sigma$ of the state process $x(t)$,
determine 
\[
\hat\Phi\in\argmin\{\mD(\Phi\|\Psi)\mid \mbox{ such that }\eqref{eq:moment}\mbox{ holds}\}.
\]
\end{problem}
\

Next we discuss the structure of solutions. The expressions we give provide and alternative to those in \cite{FMP} and require fewer variables in general. This non-redundant structure of solutions is analogous to the reduction in the number of variables enabling the fast algorithms for Kalman filtering in \cite{fast}.


As in our previous work on the moment problem, e.g., \cite{SIGEST,BGL,BGL2}, solving Problem 1 reduces to convex optimization. 
With $G_0$ given by \eqref{eq:Gtilde}, the optimization criterion is the 
strictly convex functional 
\begin{equation}
\label{J(X)}
\begin{split}
\mathbf{J}(X)&=\tr \left\{(HX+X'H')\phantom{\int} \right.\\
&\left.-\int\log \big(\Psi^{-1}+G_0^*X + X'G_0\big)\right\},
\end{split}
\end{equation}
defined on the open set $\mathcal{X}_+$ of matrices $X\in\bR^{n\times m}$ such that
$X'\in \range(\Theta^*)$, i.e., $B'X$ is symmetric, and
\begin{equation}
\label{eq:Q(X)}
Q(z):=\Psi(z)^{-1}+G_0(z)^*X + X'G_0(z) 
\end{equation}
is positive definite at each point $z=e^{i\theta}$ on the unit circle. 

\

\begin{thm}\label{thm1}\em
Let $\Sigma$ be a symmetric, positive definite $n\times n$ matrix in the range of $\Gamma$, and let $H$ be given by \eqref{eq:HY}. Suppose that the prior spectral density $\Psi$ is coercive on the unit circle and that the components of $\theta\mapsto\Psi(e^{i\theta})^{-1}$ are Lipschitz continuous.
Then Problem 1 has the unique solution
\begin{subequations}\label{eq:optimalPhiy}
\begin{equation}
\label{eq:Phihat}
\hat\Phi=\hat{Q}^{-1},
\end{equation}
where 
\begin{equation}
\label{eq:Qhat}
\hat{Q}= \Psi^{-1}+G_0^*\hat{X} + \hat{X}'G_0
\end{equation}
for some $\hat X\in \mathcal{X}_+$.
\end{subequations}
The matrix $\hat X$ is the unique minimizer of the functional $\mathbf{J}(X)$, and it is also the unique solution of the stationarity condition
\begin{equation}
\label{eq:Xhat}
\int \big(\Psi^{-1}+G_0^*X + X'G_0\big)^{-1}G_0^*=H .
\end{equation}
\end{thm}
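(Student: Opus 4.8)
The plan is to treat Problem 1 as a convex program and to identify the functional $\mathbf{J}$ as its dual objective, then recover the optimizer by a direct verification. First I would record the structural facts that make the problem convex: the constraint $\int G\Phi G^* = \Sigma$ is affine in $\Phi$, and $\Phi\mapsto\mD(\Phi\|\Psi)$ is strictly convex because $\Phi\mapsto-\log\det\Phi$ is strictly convex on positive-definite matrices (the term $\tr\Phi\Psi^{-1}$ is linear and $\tr(\log\Psi+I)$ is constant). Forming the Lagrangian with a symmetric multiplier $\Lambda$ for the moment constraint and minimizing pointwise over $\Phi$ gives the candidate $\Phi=Q^{-1}$ with $Q=\Psi^{-1}-2G^*\Lambda G$; invoking identity \eqref{eq:Lambda2X} of Proposition \ref{prop:bijection2} I would rewrite $G^*\Lambda G=G_0^*X+X'G_0$ after absorbing the scalar into $X$, which produces the form \eqref{eq:Qhat} and explains why the natural variable is $X$ with $X'\in\range(\Theta^*)$. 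Substituting the candidate back into the Lagrangian yields $\mathbf{J}(X)$ up to an additive constant, so minimizing $\mathbf{J}$ over $\mathcal{X}_+$ is the dual problem.

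Next I would do the calculus of $\mathbf{J}$ directly. Its convexity on $\mathcal{X}_+$ comes from concavity of $X\mapsto\int\tr\log Q(X)$ precomposed with the affine map $X\mapsto Q$, and strict convexity from injectivity of $X\mapsto Q$ on $\{X:X'\in\range(\Theta^*)\}$, which is precisely the bijection content of Proposition \ref{prop:bijection2} (that $G_0^*X+X'G_0\equiv 0$ forces $X=0$). Computing the first variation, the linear part contributes $2\tr(H\,\delta X)$ and the logarithmic part $-2\,\Re\int\tr(Q^{-1}G_0^*\,\delta X)$, so stationarity over admissible directions is exactly \eqref{eq:Xhat}, namely $\int Q^{-1}G_0^*=H$. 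I would then show stationarity is equivalent to feasibility: a short computation using $AG(z)=z(G(z)-B)$ gives, for any $\Phi$, the identity $\Sigma_\Phi - A\Sigma_\Phi A' = B\!\int\!\Phi G_0^* + (\int\! G_0\Phi)B'$ with $\Sigma_\Phi:=\int G\Phi G^*$; combined with $\int\hat Q^{-1}G_0^*=H$ and uniqueness of the solution of the Stein equation $\Sigma-A\Sigma A'=BH+H'B'$ (valid since $A$ is a stability matrix), this forces $\Sigma_{\hat\Phi}=\Sigma$, so $\hat\Phi=\hat Q^{-1}$ is feasible.

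With feasibility established, optimality and uniqueness follow from a clean orthogonality identity rather than an abstract duality-gap theorem. For an arbitrary feasible $\Phi$ I would substitute $\Psi^{-1}=\hat\Phi^{-1}-(G_0^*\hat X+\hat X'G_0)$ into $\mD(\Phi\|\Psi)-\mD(\hat\Phi\|\Psi)$. The cross term $\int\tr\big((\Phi-\hat\Phi)(G_0^*\hat X+\hat X'G_0)\big)$ reduces to $2\,\Re\tr\big((H_\Phi-H_{\hat\Phi})\hat X\big)$ with $H_\Phi:=\int\Phi G_0^*$; since $\Phi$ and $\hat\Phi$ realize the same $\Sigma$, the difference $H_\Phi-H_{\hat\Phi}$ lies in $\ker\Theta=(\range\Theta^*)^\perp$ while $\hat X'\in\range(\Theta^*)$, so this term vanishes — which is exactly the role of the constraint that $B'X$ be symmetric. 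What remains collapses to $\mD(\Phi\|\Psi)-\mD(\hat\Phi\|\Psi)=\mD(\Phi\|\hat\Phi)\ge0$, with equality iff $\Phi=\hat\Phi$, yielding global optimality and uniqueness simultaneously.

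The hard part will be existence: showing that $\mathbf{J}$ actually attains its minimum in the \emph{open} set $\mathcal{X}_+$, equivalently that $\hat Q$ stays coercive and $\hat\Phi$ is a genuine pole-free spectral density. Boundedness along rays $X=tX_0$ is manageable, since the logarithmic term grows only like $\log t$ while the linear term is $2t\,\tr(HX_0)$, and $2\tr(HX_0)=\tr(\Lambda\Sigma)=\int\tr\big((G_0^*X_0+X_0'G_0)\Phi\big)>0$ for every nonzero recession direction (those keeping $Q\succeq0$), precisely because $\Sigma\succ0$ and a coercive feasible $\Phi$ exists. The genuine obstacle is ruling out escape to the boundary of $\mathcal{X}_+$, where $Q$ degenerates on the circle: because $-\int\tr\log Q$ stays finite when $Q$ has only isolated zeros, the barrier argument is not automatic. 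Here I expect to need the coercivity of $\Psi$ and the Lipschitz continuity of $\Psi^{-1}$ to control the local order of vanishing of $Q$, so that $Q^{-1}$ fails to be integrable whenever $Q$ touches singularity; the stationarity integral $\int Q^{-1}G_0^*$ would then diverge and could not equal the finite matrix $H$, forcing the minimizer into the interior. Making this blow-up estimate rigorous, and thereby placing the unique stationary point of the strictly convex, coercive $\mathbf{J}$ inside $\mathcal{X}_+$, is the crux of the argument.
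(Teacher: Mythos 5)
Your route is the same as the paper's: pass to the Lagrangian dual, eliminate the redundancy of $\Lambda\in\range(\Gamma)$ through the correspondence of Proposition~\ref{prop:bijection2} so that the dual becomes minimization of $\mathbf{J}$ over $\mathcal{X}_+$, and then place a minimizer of the strictly convex $\mathbf{J}$ in the interior, where stationarity yields \eqref{eq:Xhat} and hence \eqref{eq:optimalPhiy}. Two of your intermediate steps deviate in detail and are correct, arguably cleaner than the paper's: your coercivity argument via recession directions, using $2\tr(H X_0)=\tr(\Lambda_0\Sigma)=\int \tr\big((G_0^*X_0+X_0'G_0)\Phi\big)>0$ for a coercive feasible $\Phi$ (the paper instead proves Lemma~\ref{lem:bounded} by a scaling argument and a contradiction with the weak-duality bound of Lemma~\ref{lem:lowerbound}); and your Pythagorean identity $\mD(\Phi\|\Psi)-\mD(\hat\Phi\|\Psi)=\mD(\Phi\|\hat\Phi)$, which exploits exactly the orthogonality $H_\Phi-H_{\hat\Phi}\in\ker\Theta\perp\range(\Theta^*)\ni\hat X'$ and delivers optimality and primal uniqueness in one stroke, where the paper invokes the Lagrangian sufficiency implicit in Section~\ref{sec:dual}. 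Your derivation of feasibility from \eqref{eq:Xhat} via uniqueness of the solution of the Stein equation is also sound.

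The gap is the step you yourself flag as the crux, and the mechanism you sketch for it would not close as stated. You argue that a boundary minimizer is impossible because ``the stationarity integral $\int Q^{-1}G_0^*$ would diverge and could not equal the finite matrix $H$.'' But a minimizer of a convex function over the convex set $\bar{\mathcal{X}}_+$ that sits on the boundary is not a stationary point and need not satisfy \eqref{eq:Xhat} at all; it only satisfies the one-sided condition $\delta\mathbf{J}(\bar X; X-\bar X)\ge 0$ for inward directions $X-\bar X$, and, as you yourself note, $\mathbf{J}(\bar X)$ can be finite there since $\log\det Q$ remains integrable across isolated zeros. So divergence of the gradient integral at $\bar X$ does not, by itself, contradict minimality of $\bar X$; it only shows the gradient fails to exist there. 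What is actually needed---and what the paper supplies in Lemma~\ref{lem:uniquemin} together with Lemma~\ref{Mlemma} and Proposition~\ref{Qprop}---is twofold. First, $\mathbf{J}$ must be extended to $\bar{\mathcal{X}}_+$ as a lower semicontinuous function (via Fatou's lemma), so that with your coercivity Weierstrass gives a minimizer over the closure at all. Second, one needs a one-sided directional-derivative blow-up: for interior $X$ and boundary $\bar X$ with $\det\bar Q(e^{i\theta_0})=0$, setting $X_\lambda=\bar X+\lambda(X-\bar X)$, a monotone-convergence argument shows $\delta\mathbf{J}(X_\lambda;-(X-\bar X))\to+\infty$ as $\lambda\to 0$, because $\int\tr(\bar Q^{-1})=\infty$; convexity then forces $\mathbf{J}(X_\lambda)<\mathbf{J}(\bar X)$ for small $\lambda$, so no boundary point can minimize. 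The divergence $\int\tr(\bar Q^{-1})=\infty$ is exactly where the Lipschitz hypothesis on $\Psi^{-1}$ is consumed: by a Schur-complement reduction (Proposition~\ref{Qprop}) and the estimate of Lemma~\ref{Mlemma}, Lipschitz continuity bounds the entries of $\bar Q$ by $K|\theta-\theta_0|$ near the singular point, forcing a diagonal entry of $\bar Q^{-1}$ to grow at least like $1/(L|\theta-\theta_0|)$, which is non-integrable. You correctly guessed that the hypotheses enter through such a blow-up estimate, but without the lower-semicontinuity step and with the boundary argument formulated through the stationarity equation rather than through one-sided directional derivatives, the existence of an interior minimizer---and with it the theorem---remains unproven.
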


\bigskip

The solution in the theorem can be obtained numerically by 
a Newton method to compute the minimizer of $\mathbf{J}$. To this end,
we compute the gradient
\begin{equation}
\label{eq:gradient}
\frac12\frac{\partial\mathbf{J}}{\partial X}= H- \int Q^{-1}G_0^*,
\end{equation}
where $Q$ is given by \eqref{eq:Q(X)}, and the Hessian
\begin{equation}
\label{eq:Hessian}
\frac12\mathcal{H}(X)= \int G_0 Q^{-2}G_0^* >0.
\end{equation}
The positivity of the Hessian indeed shows that the functional $\mathbf{J}$ is strictly convex.
A possible starting point is $X=0$.


In the next theorem we consider the special case where the prior $\Psi$ has the form $\Psi=(G^*\Lambda_0 G)^{-1}$. Then the solution to Problem 1 can be given in closed form.

\begin{thm}\label{thm2}\em
Let $\Sigma$ be a positive definite $n\times n$ matrix in the range of $\Gamma$, and suppose that the prior $\Psi$ is given by \begin{equation}
\label{eq:Phiz}
\Psi=(G^*\Lambda_0 G)^{-1}\, .
\end{equation}
Then Problem 1 has the unique solution
\begin{equation}
\label{eq:closedform}
\hat\Phi=\big(G^*\hat\Lambda G\big)^{-1},
\end{equation}
where
\begin{equation}
\nonumber 
\hat\Lambda:=\Sigma^{-1}B(B'\Sigma^{-1} B)^{-1}B'\Sigma^{-1}
\end{equation}
and does not depend on $\Lambda_0$.
\end{thm}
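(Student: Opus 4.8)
My plan is to verify that the stated $\hat\Phi=(G^*\hat\Lambda G)^{-1}$ satisfies the three conditions that, by the strict convexity established in Theorem~\ref{thm1}, characterize the unique optimizer: positivity of $\hat\Phi^{-1}$ on the circle, the Lagrangian stationarity form $\hat\Phi^{-1}=\Psi^{-1}+G^*\Lambda G$ for some symmetric $\Lambda$, and primal feasibility $\int G\hat\Phi G^*=\Sigma$. The form condition is immediate, and it is precisely where the independence from $\Lambda_0$ originates: since $\Psi^{-1}=G^*\Lambda_0 G$, the candidate gives $\hat\Phi^{-1}-\Psi^{-1}=G^*(\hat\Lambda-\Lambda_0)G\in\range(\Gamma^*)$, so stationarity holds with multiplier $\hat\Lambda-\Lambda_0$; equivalently, Proposition~\ref{prop:bijection2} applied to $\hat\Lambda$ and to $\Lambda_0$ yields matrices whose difference is the $\hat X\in\mathcal{X}_+$ satisfying $\Psi^{-1}+G_0^*\hat X+\hat X'G_0=G^*\hat\Lambda G$ (its transpose lies in $\range(\Theta^*)$ as a difference of two such). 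Because $\hat\Lambda$ itself makes no reference to $\Lambda_0$, neither can the reconstructed $\hat\Phi$. The remaining work is positivity and feasibility, and feasibility is the crux.

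The reduction I would use is to conjugate by $\Sigma^{1/2}$. Writing $F:=\Sigma^{-1/2}G$ and $C:=\Sigma^{-1/2}B$, one checks that $\Sigma^{1/2}\hat\Lambda\Sigma^{1/2}=C(C'C)^{-1}C'=:\Pi$, the orthogonal projector onto $\range(C)$; thus the particular form of $\hat\Lambda$ is exactly the $\Sigma$-pullback of an orthogonal projection and $\hat\Lambda\Sigma\hat\Lambda=\hat\Lambda$. Feasibility $\int G\hat\Phi G^*=\Sigma$ is then equivalent to the clean identity
\[
\int F\,(F^*\Pi F)^{-1}F^*=I .
\]
For positivity, factor $G^*\hat\Lambda G=\Xi^*P\Xi$ with $P:=B'\Sigma^{-1}B>0$ and $\Xi:=P^{-1}B'\Sigma^{-1}G=I+P^{-1}B'\Sigma^{-1}A(zI-A)^{-1}B$, normalized so that $\Xi(\infty)=I$; then $G^*\hat\Lambda G\succ0$ on the circle iff $\det\Xi$ has no zero there, i.e.\ iff the closed-loop matrix $A_\times:=\big(I-B(B'\Sigma^{-1}B)^{-1}B'\Sigma^{-1}\big)A$ has no eigenvalue on $|z|=1$.

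The heart of the argument is therefore to show that $A_\times$ is a stability (Schur) matrix and then to evaluate the integral. With $\Xi$ outer, $\hat\Phi=\Xi^{-1}P^{-1}\Xi^{-*}$ and $U:=G\Xi^{-1}$ is analytic outside the disc with $U(\infty)=B$, so by Parseval $\int G\hat\Phi G^*=\int UP^{-1}U^*=\sum_{k\ge0}U_kP^{-1}U_k^*$, where $U(z)=\sum_{k\ge0}U_kz^{-k}$. Introducing a realization $U(z)=B+C_U(zI-A_\times)^{-1}B_U$ and summing the geometric series reduces this to $BP^{-1}B'+C_U\mathcal{P}C_U'$, with $\mathcal{P}$ the solution of a Stein equation in $A_\times$; the final step is to show this matrix coincides with $\Sigma$ by matching it against the Lyapunov characterization $\Sigma-A\Sigma A'=BH+H'B'$ of $\Sigma\in\range(\Gamma)$ in \eqref{eq:Sigma}. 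I expect the stability of $A_\times$ together with this Stein-versus-Lyapunov bookkeeping to be the main obstacle: stability is essentially the statement that the central (maximum-entropy) factor $\Xi$ is minimum phase, which must be extracted from $\Sigma\succ0$, the reachability of $(A,B)$, and $A$ being Schur, while the covariance match requires carefully relating a Stein equation driven by $A_\times$ to one driven by $A$.

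An alternative to the residue computation is to verify the stationarity condition \eqref{eq:Xhat} directly, namely $\int(G^*\hat\Lambda G)^{-1}G_0^*=H$, using $G_0=G-\tfrac12 B$ and the explicit $H$ of \eqref{eq:HY}; this trades the spectral-factorization step for the same underlying integral but lets one exploit Propositions~\ref{prop:bijection1}--\ref{prop:bijection2} symbolically. Either way, once positivity and the integral identity are in hand, the uniqueness in Theorem~\ref{thm1} closes the argument. The projected identity $\int F(F^*\Pi F)^{-1}F^*=I$ also invites a reproducing-kernel interpretation in $H^2$, which may furnish the shortest route to feasibility.
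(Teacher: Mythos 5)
Your overall strategy---verify that the candidate $\hat\Phi=(G^*\hat\Lambda G)^{-1}$ is feasible, that $\hat Q=\hat\Phi^{-1}$ is positive on the circle, and that it has the Lagrangian stationarity form, then invoke uniqueness---is a legitimate route, and your algebraic reductions are correct: the projector identity $\Sigma^{1/2}\hat\Lambda\Sigma^{1/2}=C(C'C)^{-1}C'$ with $C=\Sigma^{-1/2}B$, the factorization $G^*\hat\Lambda G=\Xi^*P\Xi$ with $P=B'\Sigma^{-1}B$ and $\Xi(\infty)=I$, and the reduction of positivity to the Schur property of $A_\times=\bigl(I-B(B'\Sigma^{-1}B)^{-1}B'\Sigma^{-1}\bigr)A$ all check out. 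But the proposal has a genuine gap: the two claims on which everything rests---Schur stability of $A_\times$ (equivalently, outerness of $\Xi$, hence positivity of $\hat Q$ on the circle) and the moment-matching identity $\int G(G^*\hat\Lambda G)^{-1}G^*=\Sigma$---are never proved. You explicitly defer them as ``the main obstacle,'' but they are not bookkeeping: together they constitute the maximum-entropy theory of state covariances (essentially the content of \cite{geo4}), and without them the verification argument does not close, since they are exactly the hypotheses needed before the uniqueness statement of Theorem~\ref{thm1} can be applied to the candidate.

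The paper's own proof is structured precisely to avoid both obstacles by working in the dual. Since $\Psi^{-1}=G^*\Lambda_0G$, every feasible dual variable yields $Q=G^*(\Lambda_0+\Lambda)G$, which by Lemma~\ref{lemma1} can be rewritten as $Q=G^*CC'G$ with $C'G$ outer; the Szeg\H{o}--Jensen formula $\int\log\det(G^*CC'G)=\log\det(B'CC'B)$, valid because $C'G$ is outer with value $C'B$ at the relevant evaluation point, collapses the integral term, so the dual functional becomes the finite-dimensional function $\tilde{\mathbf{J}}(C)=\tr(C'\Sigma C)-\log\det(B'CC'B)$, whose stationarity equation $B'CC'=B'\Sigma^{-1}$ is solved in closed form by $C=\Sigma^{-1}B(B'\Sigma^{-1}B)^{-1/2}$. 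In that route positivity is carried by the outer parametrization of the feasible set, and feasibility is delivered by the duality framework of Sections~\ref{sec:dual}--\ref{sec:fast} rather than by a residue computation. To complete your route you would have to actually prove outerness of $B'\Sigma^{-1}G$ (using $\Sigma>0$, reachability of $(A,B)$, and $A$ Schur---this is nontrivial and is where \cite{geo4} does real work) and then carry out the Stein-versus-Lyapunov computation against \eqref{eq:Sigma}; both are true and doable, but as written they stand as assumptions, not results.
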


\

The proofs of Theorems \ref{thm1} and \ref{thm2} are given in Sections \ref{sec:fast} and \ref{sec:closedform}, respectively.

\

\begin{remark} It is interesting to point out that the solution $\hat\Phi$ to Problem 1 shares the same zeros as the prior $\Psi$. To see this note that at any value on the complex plane where $\Psi$ becomes singular, $\hat Q$ becomes infinity along suitable direction, and therefore $\hat\Phi$ becomes singular as well. This property is also present in solutions to moment problems that minimize alternative entropy functionals and has been explored in our earlier work.
It is quite instructive to consider Problem 1 in the scalar case ($m=1$). Then the optimal solution takes the form
\begin{equation}
\label{eq:scalarsolution}
\hat\Phi(e^{i\theta})=\frac{\Psi(e^{i\theta})}{1+2\Psi(e^{i\theta})\text{Re}\{G_0(e^{i\theta})^*\hat{X}\}}.
\end{equation}
Any zeros of the prior $\Psi$ will therefore  be zeros also of $\hat\Phi$. However, in the special case of rational $\Psi$, the dimension of modeling filters corresponding to $\hat\Phi$ is enlarged as compared to alternative formulations in our earlier works, e.g., \cite{BGL,BGL2}.
\end{remark}

\

\begin{remark}
Since the closed-form solution \eqref{eq:closedform} does not depend on $\Lambda_0$, we may in particular choose $\Lambda_0=I$. Then, in the important case when $G_k(z):=z^{-(k-1)}$, $k=1,2,\dots,n$, we have $\Psi=I$, leading to an autoregressive (maximum-entropy) model.  
\end{remark}

\

\begin{remark}\label{remark3} 
Going back to \cite{kalman} the original motivation
was to identify and characterize solutions to moment problems having low degree. It is instructive to consider the scalar trigonometric moment problem with data \eqref{eq:ABtoeplitz} and \eqref{eq:toeplitz}, that is, the problem to match the $n$ covariance samples $\{c_0,\,c_1,\ldots,c_{n-1}\}$ with a rational power spectrum $\Phi$,
in the sense that
\[
c_k=\int e^{ik\theta}\Phi \mbox{ for } k=0,\,1,\ldots,n-1
\]
holds, or, equivalently, \eqref{eq:moment} holds for the $n\times n$ covariance matrix $\Sigma$. There is a generic set of covariance samples (i.e., a set with an open interior) for which the minimal degree solution has spectral factors of degree $n-1$ \cite{thesis}. (For a more general result of this type; see \cite[Theorem 2.2]{BLpartial97}.) The family of all power spectra with the same dimensionality can be parametrized by a set of arbitrarily selected $n-1$ spectral zeros (i.e., zeros of the corresponding minumum-phase spectral factor) -- existence of power spectra corresponding to each such choice was shown in \cite{thesis,geo0} and uniqueness was shown in \cite{BGuLM}. Likewise, in the case of $m$-vector valued time series where an $n\times n$ covariance matrix $\Sigma$ is available, the family of generically minimal degree solutions has spectral factors of degree $n-m$, parametrized accordingly for a choice of spectral zeros \cite[Section IV and Corollary 2]{takyar}). On the other hand, a direct approach of constructing solutions based on the  THREE framework gives a family of solutions with spectral factors of degree $n$ \cite[Section IV-B]{geo6} (instead of the generic minimum $n-m$ in \cite{takyar}) likewise parametrized by a suitable choice of spectral zeros. The current framework allows constructing solutions \eqref{eq:closedform} with spectral factors of degree $n$ only when the zero-structure is trivial (i.e., identical to the eigenvalues of the matrix $A$), while in general the best bound one can provide from \eqref{eq:optimalPhiy} for the dimension of spectral factors is $n+\tfrac{1}{2}\times$(degree of $\Psi$); cf.\ \cite[Section IV]{FMP}.
\end{remark}

\section{A simple example}\label{sec:example}
Consider the case where $\Sigma$ is a Toeplitz matrix as given in \eqref{eq:toeplitz} with covariance lags $c_k:=\E\{y(t+k)y(t)\}$   of a scalar stationary process $y$. Then $G$ is given by \eqref{eq:Fourier} with $M=I$. Moreover, $A$ and $B$ are given by \eqref{eq:ABtoeplitz}, and hence $B'(\Sigma- A\Sigma A)=(c_0,c_1,\dots,c_{n-1})$ and $B'B=1$. Consequently it follows from \eqref{eq:YLyapunov} that $Y=\frac12 c_0$ and from \eqref{eq:H} that $H=(\frac12 c_0,c_1,\dots,c_{n-1})$. Then, setting $X':=(q_0, q_1,\dots, q_{n-1})$, we have
\begin{displaymath}
HX+X'H'=\langle c,q\rangle :=\sum_{k=-(n-1)}^{n-1} c_kq_k
\end{displaymath}
and
\begin{subequations}\label{eq:all}
\begin{equation}\label{presence}
Q(e^{i\theta})=\Psi(e^{i\theta})^{-1} +\sum_{k=-(n-1)}^{n-1}q_ke^{ik\theta}.
\end{equation}
Problem 1 then amounts to minimizing
\begin{equation}\label{eq:Jagain}
\mathbf{J}(q)=\langle c,q\rangle -\int\log Q
\end{equation}
over all $q:=(q_0,q_1,\dots, q_{n-1})$ such that $Q(e^{i\theta})>0$ for all $\theta$. In this notation the stationarity condition \eqref{eq:Xhat} becomes
\begin{equation}
\label{cmoment}
\int_{-\pi}^\pi e^{ik\theta}Q^{-1}\frac{d\theta}{2\pi} =c_k, \quad k=0,1,\dots,n-1.
\end{equation}
\end{subequations}

\

\begin{remark}
It is interesting to compare the functional $\mathbf{J}(q)$ and the form of solution above to those in the framework of, e.g.,
\cite{SIGEST,BGL,BGL2}. There, the corresponding functional is
\[
\mathbf{J}(q)=\langle c,q\rangle -\int\Psi\log Q
\]
instead of \eqref{eq:Jagain}, with
\[
Q(e^{i\theta})=\sum_{k=-(n-1)}^{n-1}q_ke^{ik\theta}
\]
and moment conditions
\[
\int_{-\pi}^\pi e^{ik\theta}\frac{\Psi}{Q}\frac{d\theta}{2\pi} =c_k, \quad k=0,1,\dots,n-1,
\]
instead of \eqref{eq:all}.
We see that the present framework is analogous to the maximum-entropy solution in these earlier works except for the absence of $\Psi(e^{i\theta})^{-1}$ in the corresponding expression for $Q$ which is traded off with the direct presence of $\Psi$ in functional and the stationarity conditions. The optimal solution is
\[
\hat \Phi(e^{i\theta})=\frac{\Psi(e^{i\theta})}{\sum q_k e^{ik\theta}}
\]
in this case instead of 
\[
\hat \Phi(e^{i\theta})=\frac{\Psi(e^{i\theta})}{1+ \Psi(e^{i\theta}) \sum q_k e^{ik\theta}}
\] 
  in our present framework.
\end{remark}

\

We proceed with our numerical example. To this end, we select
\begin{align}\Phi(z)&=|(z+1)(z^{-1}+1)|^3  \label{true}\\
&= z^{-3}+ 6z^{-2}+ 15z^{-1}+   20 +15z+ 6z^2+ z^3\nonumber
\end{align}
that corresponds to a moving average filter with transfer function $W(z)=1+3z^{-1}+3z^{-2}+z^{-3}$.
In Fig.~2 we first compare the true power  spectral density $\Phi$ in \eqref{true}, evaluated at $z=e^{i\theta}$ for $\theta\in[0,\,\pi]$, with a prior $\Psi=10 (1+0.9 \cos(\theta)(1+0.9\cos(\theta))^2)$ that is selected
to have a low pass characteristic. We seek to match $8$ moments, namely,
$c=(20,\,15,\,6,\,1,\,0,\,0,\,0,\,0)$.
Next, in Fig.\ 3, we compare $\Phi$ with the optimal solution $\hat \Phi$ to Problem 1 for the given $\Psi$. Finally, in Fig.\ 4, we compare $\Phi$ with the solution corresponding to the choice $\Psi=1$.
The power spectral density obtained in this way, using either the Newton algorithm based on Theorem \ref{thm1} or the closed-form expression in Theorem \ref{thm2}, is an {\em all-pole} power spectrum that agrees with the given moments. 

\begin{center}
       \hspace*{-11pt}\includegraphics[width=0.52\textwidth]{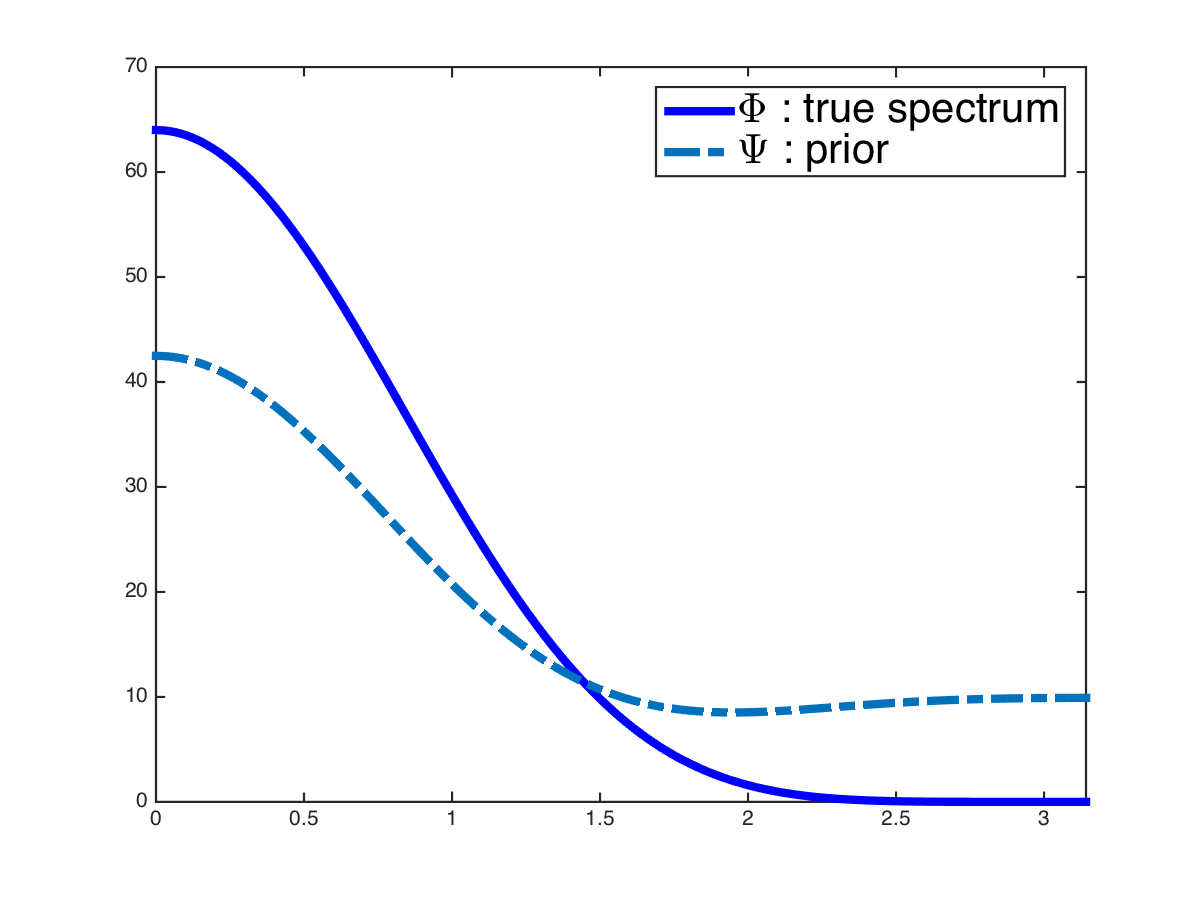}\\
      {\small Fig.\ 2. True spectrum (solid line) vs.\ prior (dashed)}
      \end{center}
      \begin{center}
       \hspace*{-11pt}\includegraphics[width=0.52\textwidth]{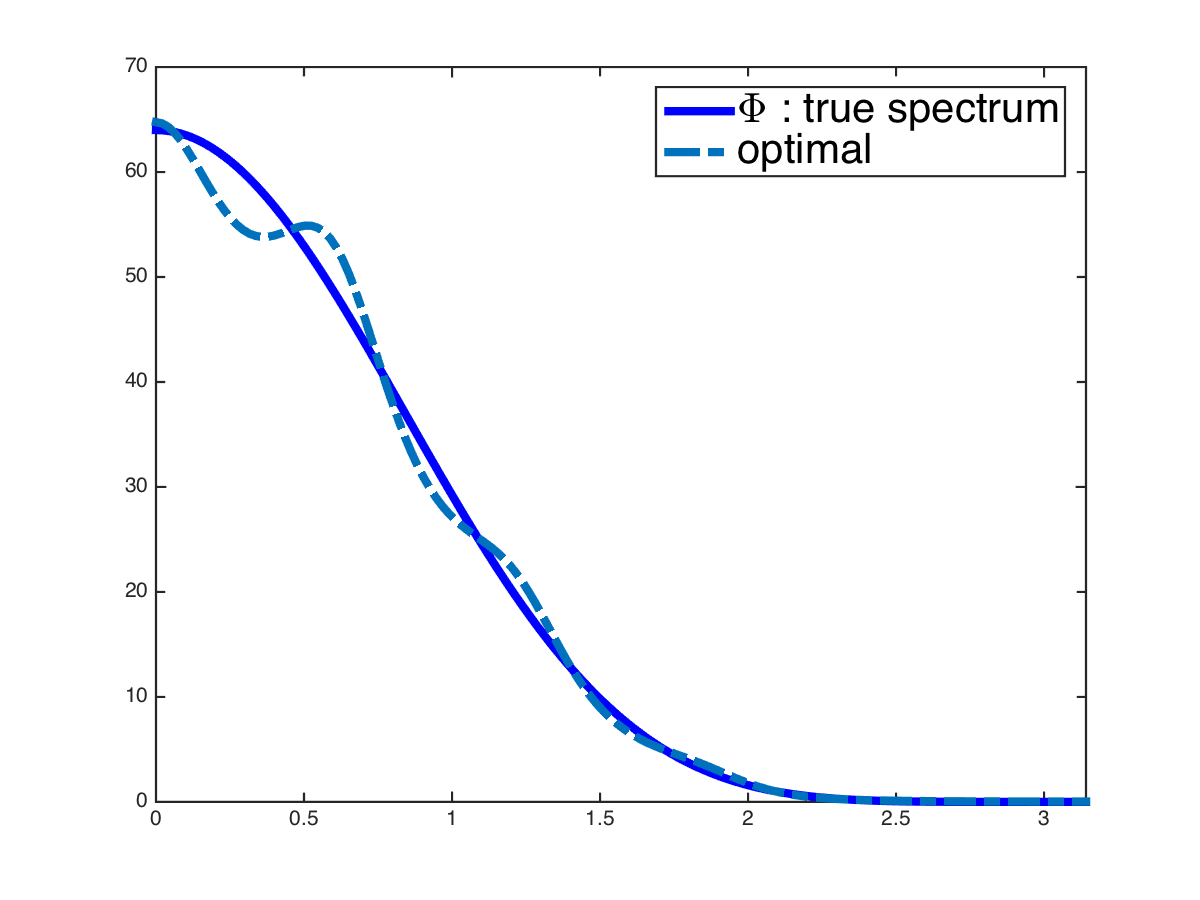}\\
  {\small Fig.\ 3. True spectrum (solid line) vs.\ optimal (dashed)}
  \end{center}
  
      \begin{center}
       \hspace*{-11pt}\includegraphics[width=0.52\textwidth]{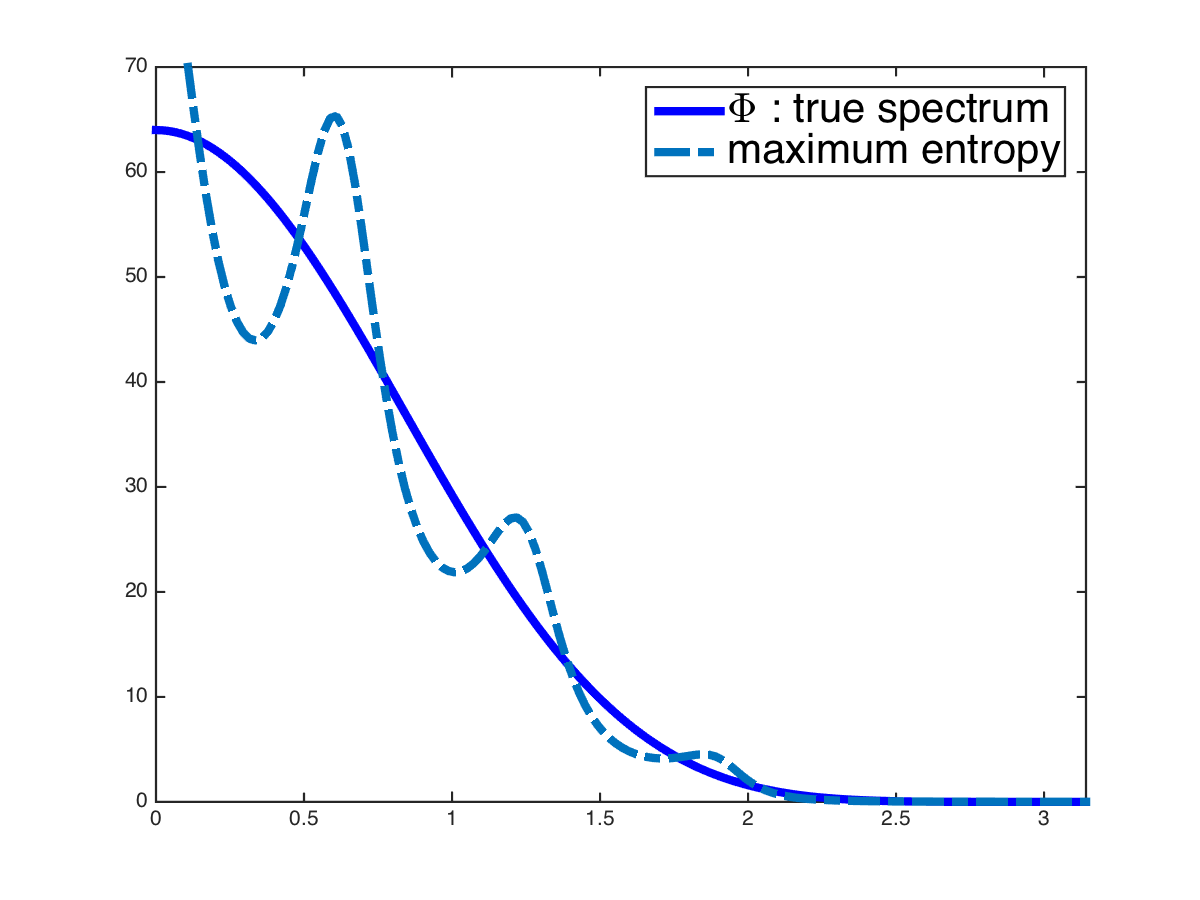}\\
  {\small Fig.\ 4. True spectrum (solid line) vs.\ ME spectrum (dashed)}
\end{center}

 It is interesting to observe the oscillatory character of the all-pole power spectral density, which, of course, coincides with the spectrum obtained following the classical maximum entropy (ME) method \cite{haykin}. In contrast, the use of a prior which corresponds to a low pass filter aliviates the oscillations (Fig.\ 2).

\section{Dual problem and the form of the minimizer}\label{sec:dual}

Suppose that $\Sigma$ belongs to the range of the operator $\Gamma$, defined by \eqref{Gamma}. Then, Problem~1 amounts to minimizing 
\begin{subequations}\label{primal}
\begin{equation}
\label{Dagain}
\mD(\Phi\|\Psi)=\frac12\int \tr\left(\Phi\Psi^{-1}-\log\Phi+\log\Psi-I\right) 
\end{equation}
over all spectral densities $\Phi$ satisfying the moment condition
\begin{equation}
\label{Sigmaagain}
\Sigma=\int G\Phi G^*.
\end{equation}
\end{subequations}
Proceeding  along the lines of \cite{GL1}, it was shown in \cite{FMP} that the dual of \eqref{primal} is the problem to minimize
\begin{subequations}\label{dualproblem}
\begin{equation}
\label{eq:dual}
\mJ(\Lambda) = \tr\left\{\Lambda\Sigma - \int \log Q \right\}
\end{equation}
over all real, symmetric $n\times n$  matrices $\Lambda$ in the range of $\Gamma$ such that 
\begin{equation}
\label{eq:Q}
Q(z):=\Psi(z)^{-1}+G(z)^*\Lambda G(z)
\end{equation}
\end{subequations}
is positive on the unit circle.
For the convenience of the reader, we also review some steps in the proof in our present notation.
We denote the class of feasible $\Lambda$ by $\mathcal{L}_+$, i.e., 
\begin{displaymath}
\mathcal{L}_+ =\{\Lambda\in\text{range}(\Gamma)\mid \Lambda'=\Lambda;\,Q(e^{i\theta})>0,\, \forall \theta\}.
\end{displaymath}
We note in passing that the rationality of  $G$ is not needed at this point; in fact,
 an interesting example with $G(e^{i\theta})=[1,\; e^{i\theta},\;e^{i\sqrt{2}\theta}]'$ is motivated in the context of sensor array processing in \cite{geo6}. 

The Lagrangian for the problem above becomes 
\begin{equation}\nonumber
\label{eq:Lagrangian}
\begin{split}
L(\Phi,\Lambda)&=\mD(\Phi\|\Psi)+\tr\left\{\Lambda\left(\int G\Phi G^* -\Sigma\right)\right\},\\
			&= -\tr(\Lambda\Sigma)+ \int \tr\left\{ \Phi (\Psi^{-1}+G^*\Lambda G)\right.\\
			& \phantom{xxxxxxxxxxxxx}\left.-\log\Phi +\log\Psi -I\right\},
\end{split}
\end{equation}
where $\Lambda$ is a symmetric $n\times n$ matrix of Lagrange multipliers. Since $\tr\left\{\Lambda\left(\int G\Phi G^* -\Sigma\right)\right\}$ is simply the inner product of $\Lambda$ with elements in the range of $\Gamma$, we can restrict $\Lambda$ to the same space and therefore assume that
\[
\Lambda\in\text{range}(\Gamma).
\]
The function $\Phi\mapsto L(\Phi,\Lambda)$ is strictly convex for each $\Lambda$ such that $Q$, defined by \eqref{eq:Q},  is positive semidefinite on the unit circle. If $Q$ fails to be positive semidefinite, $L(\Phi,\Lambda)$ can be made arbitrarily small for some $\Phi$, and hence such a $\Lambda$ is not a candidate in the dual problem. Hence we may restrict $\Lambda$ to the class $\mathcal{L}_+$. Setting the directional derivative
\begin{displaymath}
\delta L(\Phi,\Lambda;\delta\Phi)=\int\tr\left\{ (\Psi^{-1}+G^*\Lambda G -\Phi^{-1})\delta\Phi\right\}
\end{displaymath}
equal to zero, we obtain 
\begin{equation}
\label{eq:optimalPhi}
\Phi = (\Psi^{-1}+G^*\Lambda G)^{-1},
\end{equation}
which inserted into the Lagrangian yields the dual functional
\begin{displaymath}
\begin{split}
\varphi(\Lambda) &= -\tr(\Lambda\Sigma)\\
&\phantom{xxx} + \int \tr\left\{\log (\Psi^{-1}+G^*\Lambda G) +\log\Psi\right\}\\
			   &= -\mJ(\Lambda) + \int \tr\log\Psi.
\end{split}
\end{displaymath}
Since this dual functional should be maximized, the dual problem is equivalent to minimizing $\mJ$ over all $\Lambda\in \mathcal{L}_+$. It was shown in \cite{FMP} that this problem has a unique solution. This problem differs from the one in \cite{GL1} in that the prior $\Psi$ in \cite{GL1} does not occur in $Q$ but instead multiples $\log Q$.  Unlike the situation in \cite{GL1}, $\tr(\Lambda\Sigma)$ might negative in the present setting which complicates the analysis somewhat. 

We shall need the following lemma in Section~\ref{sec:fast}.

\begin{lemma}\em \label{lem:lowerbound}
If $\Sigma$ belongs to the range of $\Gamma$, then the functional $\mJ$ is bounded from below.
\end{lemma}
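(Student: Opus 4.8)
The plan is to exploit the fact that membership of $\Sigma$ in $\range(\Gamma)$ lets us rewrite $\tr(\Lambda\Sigma)$ as an integral against a fixed reference spectral density, after which a pointwise convexity estimate controls the integrand of $\mJ$ uniformly in $\Lambda$. First, since $\Sigma\in\range(\Gamma)$ and (in the setting of Problem~1) $\Sigma$ is positive definite, the existence result quoted after \eqref{eq:Sigma2} supplies a coercive, continuous spectral density $\Phi_0$ with $\Sigma=\int G\Phi_0 G^*$. Using $G^*\Lambda G=Q-\Psi^{-1}$ together with the adjoint relation $\langle\Lambda,\Sigma\rangle=\langle G^*\Lambda G,\Phi_0\rangle$, I would rewrite
\[
\tr(\Lambda\Sigma)=\int\tr\big(G^*\Lambda G\,\Phi_0\big)=\int\tr\big((Q-\Psi^{-1})\Phi_0\big),
\]
so that
\[
\mJ(\Lambda)=\int\tr\big(Q\Phi_0-\log Q\big)-\int\tr\big(\Psi^{-1}\Phi_0\big),
\]
where the final integral is a finite constant independent of $\Lambda$, since $\Psi^{-1}$ and $\Phi_0$ are bounded on the circle.

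The core estimate is a pointwise lower bound on the integrand $\tr(Q\Phi_0)-\log\det Q$. For fixed positive definite $\Phi_0(e^{i\theta})$, the convex map $Q\mapsto\tr(Q\Phi_0)-\log\det Q$ on positive definite matrices has derivative $\Phi_0-Q^{-1}$ and is therefore minimized at $Q=\Phi_0^{-1}$, with minimal value $m+\log\det\Phi_0$. Hence, pointwise, $\tr(Q\Phi_0)-\log\det Q\ge m+\log\det\Phi_0$ for every feasible $\Lambda\in\mathcal{L}_+$, and integrating gives
\[
\mJ(\Lambda)\ge m+\int\log\det\Phi_0-\int\tr\big(\Psi^{-1}\Phi_0\big).
\]
Because $\Phi_0$ is coercive and bounded, $\log\det\Phi_0$ is bounded, so the right-hand side is a finite constant not depending on $\Lambda$, which is exactly the asserted lower bound.

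The main obstacle is the coercivity requirement on $\Phi_0$: the pointwise minimum $m+\log\det\Phi_0$ degenerates to $-\infty$ precisely where $\Phi_0$ becomes singular, since along a null direction of $\Phi_0$ one can send the corresponding eigenvalue of $Q$ to $+\infty$, driving $-\log\det Q$ down while $\tr(Q\Phi_0)$ stays bounded. Thus the argument genuinely needs a strictly positive reference $\Phi_0$ with $\log\det\Phi_0$ integrable, which is why I invoke the coercive continuous solution guaranteed for positive definite $\Sigma\in\range(\Gamma)$ rather than an arbitrary feasible $\Phi_0$. A secondary point to verify is finiteness of $\int\tr(\Psi^{-1}\Phi_0)$ and $\int\log\det\Phi_0$, both immediate from continuity and coercivity of $\Psi$ and $\Phi_0$ on the compact circle. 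An alternative route via weak duality, bounding $\mJ$ below by $-\mD(\Phi_0\|\Psi)+\int\tr\log\Psi$ for a feasible spectral density $\Phi_0$, yields the same conclusion but runs into the identical coercivity issue, so I prefer the direct pointwise estimate above.
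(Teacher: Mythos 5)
Your proof is correct, and at its core it is the same argument as the paper's, rather than a genuinely different route. The paper's proof is a one-liner riding on the dual construction of Section \ref{sec:dual}: a feasible $\Phi_0$ exists because $\Sigma\in\range(\Gamma)$, and weak duality gives $\varphi(\Lambda)=\inf_\Phi L(\Phi,\Lambda)\leq L(\Phi_0,\Lambda)=\mD(\Phi_0\|\Psi)$, i.e., $\mJ(\Lambda)\geq \int\tr\log\Psi-\mD(\Phi_0\|\Psi)$. Your pointwise inequality $\tr(Q\Phi_0)-\log\det Q\geq m+\log\det\Phi_0$ is exactly the inequality $L(\Phi_0,\Lambda)\geq\inf_\Phi L(\Phi,\Lambda)$ written under the integral sign (the infimum being attained at $\Phi=Q^{-1}$), and expanding the paper's constant---with the normalization of $\mD$ actually used in its Lagrangian---recovers precisely your bound $m+\int\log\det\Phi_0-\int\tr(\Psi^{-1}\Phi_0)$; you even note yourself that the weak-duality route gives the same conclusion. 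What your write-up does do differently, and better, is the finiteness bookkeeping that the paper glosses over: the paper takes an \emph{arbitrary} feasible $\Phi_0$ and implicitly assumes $\mD(\Phi_0\|\Psi)<\infty$, which can fail for a merely integrable feasible density (if $\log\det\Phi_0$ is not integrable the asserted lower bound is $-\infty$ and the proof proves nothing). You correctly insist on the coercive, continuous $\Phi_0$ guaranteed by the existence result following \eqref{eq:Sigma2}, which makes the bound finite---but note that this quietly requires $\Sigma>0$, a hypothesis absent from the lemma as stated, though it does hold in the setting of Theorem \ref{thm1} where the lemma is actually invoked. So your version is self-contained (it does not lean on the dual-functional machinery) and more rigorous on the degenerate cases, at the cost of a slightly stronger hypothesis; the paper's version is shorter given the duality apparatus already in place.
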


\begin{proof}
The condition that $\Sigma$ belongs to the range of $\Gamma$ ensures the existence of a spectral density $\Phi_0$ satisfying \eqref{eq:moment}. Then, in view of the construction above, $\varphi(\Lambda)\leq L(\Phi_0,\Lambda)=\mD(\Phi_0\| \Psi)$ or equivalently
\begin{displaymath}
\mJ(\Lambda)\geq \int \tr\log\Psi -\mD(\Phi_0\| \Psi),
\end{displaymath}
which establishes the required bound. 
\end{proof}

%
%

\section{A fast algorithm for the dual problem}\label{sec:fast}

One of the difficulties dealing with the dual problem in Section~\ref{sec:dual} is the redundancy introduced by the integral operator $\Gamma$, which has the consequence that only the part of $\Lambda$ belonging to the range of $\Gamma$ affects the value of $\mJ(\Lambda)$. To remove this redundancy we reformulate the problem by defining  $\mathbb{R}^{n\times m}$ matrix-valued variable
\begin{equation}
\label{eq:X}
X=MB,
\end{equation}
where $M$ is the unique solution of the Lyapunov equation
\begin{equation}
\label{eq:M}
M=A'MA + \Lambda 
\end{equation}
and $\Lambda\in\range(\Gamma)$ is the is the matrix-valued variable in the dual problem in Section~\ref{sec:dual}. By Proposition~\ref{prop:bijection2}, there is a one-one correspondence between $\Lambda$ and $X$.  In view of \eqref{eq:Lambda2X}, 
\begin{displaymath}
G(z)^*\Lambda G(z)=G_0(z)^*X + X'G_0(z),
\end{displaymath}
where  $G_0$ is given by \eqref{eq:Gtilde}. Therefore \eqref{eq:Q} takes the form
\begin{equation}
\label{eq:Q(X)2}
Q(z)=\Psi(z)^{-1}+G_0(z)^*X + X'G_0(z). 
\end{equation}
Moreover, in view of \eqref{eq:M} and \eqref{eq:Sigma}, 
\begin{displaymath}
\begin{split}
\tr(\Lambda\Sigma) &= \tr(M\Sigma) - \tr(MA\Sigma A')\\
                                &=\tr(MBH) +\tr(B'MH')\\
                                &= \tr (HX+X'H').
\end{split}
\end{displaymath}
Consequently the dual functional can be expressed in terms of $X$ to obtain the functional $\mathbf{J}(X):\, \mathcal{X}_+\to\mathbb{R}$ defined by  \eqref{J(X)}, where $\mathcal{X}_+$ is a convex set. 

\

%


To prove that the functional \eqref{J(X)} has a unique minimizer in $\mathcal{X}_+$ we could now appeal to the proof in \cite{FMP} that the dual problem in Section~\ref{sec:dual} has a unique solution. However, since now the redundancy in the dual problem has been removed, we can offer a more straight-forward alternative proof. We denote by $\bar{\mathcal{X}}_+$ the closure of ${\mathcal{X}}_+$.

\

\begin{lemma}\em \label{lem:bounded}
Suppose that $\Sigma$ belongs to the range of $\Gamma$.
Then any nonempty sublevel set 
\begin{equation}
\label{eq:sublevelset}
\{ X\in\bar{\mathcal{X}}_+\mid \mathbf{J}(X)\leq r\}
\end{equation}
is bounded.
\end{lemma}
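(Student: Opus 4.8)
The plan is to show that $\mathbf{J}(X)\to+\infty$ as $\|X\|\to\infty$ within $\bar{\mathcal{X}}_+$; boundedness of every sublevel set is equivalent to this coercivity. I would argue by contradiction: suppose there were a sequence $X_k\in\bar{\mathcal{X}}_+$ with $\|X_k\|\to\infty$ but $\mathbf{J}(X_k)\le r$ for some fixed $r$. Normalize by writing $X_k=\rho_k\,V_k$ with $\rho_k=\|X_k\|\to\infty$ and $\|V_k\|=1$, and pass to a subsequence so that $V_k\to V$ with $\|V\|=1$ and $V'\in\range(\Theta^*)$ (the latter is a closed linear constraint, so it survives the limit). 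The limiting direction $V$ is the object that will drive the contradiction.

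The main work is to control the two terms of $\mathbf{J}(X)=\tr(HX+X'H')-\int\tr\log Q$ along this sequence. The linear term equals $\rho_k\,\tr(HV_k+V_k'H')$, which grows linearly in $\rho_k$ in the direction $\tr(HV+V'H')$. For the logarithmic term, set $Q_k=\Psi^{-1}+G_0^*X_k+X_k'G_0=\Psi^{-1}+\rho_k(G_0^*V_k+V_k'G_0)$. The key estimate is that $\int\tr\log Q_k$ grows at most like $\rho_k$ times the integral of $\log$ of the leading symbol, and in fact one has the elementary bound $\tr\log Q\le \tr(Q)-n$ (concavity of $\log$, or $\log x\le x-1$ applied to eigenvalues), which gives a linear-in-$\rho_k$ upper bound on $\int\tr\log Q_k$ provided $G_0^*V_k+V_k'G_0$ stays integrable—true since $G_0$ is rational with poles strictly inside the disc. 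Assembling these, $\mathbf{J}(X_k)\ge \rho_k\,\tr(HV_k+V_k'H') - C\rho_k - C'$ for constants $C,C'$, so boundedness of $\mathbf{J}(X_k)$ forces the linear coefficient $\tr(HV+V'H')$ to be controlled; the delicate point is to show it must actually be \emph{strictly positive} in every nonzero feasible direction, which is where the hypothesis $\Sigma\in\range(\Gamma)$ enters.

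To extract that strict positivity I would translate the direction $V$ back to a $\Lambda$-direction via the bijection of Proposition \ref{prop:bijection2}, so that $G_0^*V+V'G_0=G^*\Lambda_V G$ with $\Lambda_V\ne0$ and $\Lambda_V\in\range(\Gamma)$. Since $\Sigma$ lies in $\range(\Gamma)$ and is positive definite, the computation $\tr(HV+V'H')=\tr(\Lambda_V\Sigma)$ from the identity preceding \eqref{eq:Q(X)2} shows this coefficient equals $\tr(\Lambda_V\Sigma)=\int\tr(G^*\Lambda_V G\,\Phi_0)$ for any $\Phi_0$ satisfying \eqref{eq:moment}. The hard part of the whole argument is exactly here: ruling out the degenerate boundary behavior. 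If $\tr(\Lambda_V\Sigma)>0$ we win immediately. If $\tr(\Lambda_V\Sigma)\le 0$, then $G^*\Lambda_V G$ must fail to be positive definite somewhere on the circle, so along the normalized direction the symbol $\Psi^{-1}+\rho_k G^*\Lambda_V G$ develops a (near-)zero eigenvalue; one must then show the logarithmic term $-\int\tr\log Q_k$ diverges to $+\infty$ fast enough (a refined estimate using that $\log$ of a vanishing eigenvalue contributes $+\infty$, tempered by the rate at which the zero is approached and by the constraint keeping $X_k\in\bar{\mathcal{X}}_+$, i.e. $Q_k\ge0$) to overwhelm the at-most-linear growth of the rest. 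This boundary blow-up estimate, reconciling the two competing linear and logarithmic rates, is the technical crux and is the step I expect to consume most of the effort. Once coercivity is established in both regimes, continuity of $\mathbf{J}$ on $\bar{\mathcal{X}}_+$ makes each sublevel set closed and bounded, completing the lemma.
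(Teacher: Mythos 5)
There are two genuine gaps here, and the second is fatal to the route you propose, even though your overall skeleton (normalize $X_k=\rho_kV_k$, compare linear and logarithmic growth, case analysis on the sign of the linear coefficient) parallels the paper's proof.

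First, your central estimate is too weak to ever produce a contradiction. Applying $\log x\le x-1$ eigenvalue-wise gives $\int\tr\log Q_k\le\int\tr Q_k-m$, and $\int\tr Q_k=\int\tr\Psi^{-1}+\rho_k\tr\bigl(\Lambda_{V_k}\int GG^*\bigr)$, where $\Lambda_{V_k}$ corresponds to $V_k$ under Proposition~\ref{prop:bijection2}. Your lower bound therefore reads
\begin{displaymath}
\mathbf{J}(X_k)\;\ge\;\rho_k\Bigl[\tr(\Lambda_{V_k}\Sigma)-\tr\Bigl(\Lambda_{V_k}\int GG^*\Bigr)\Bigr]-\int\tr\Psi^{-1}+m .
\end{displaymath}
The logarithmic term has been replaced by another term that is \emph{linear} in $\rho_k$, and the resulting bracket has indeterminate sign; it vanishes identically when $\Sigma=\int GG^*$ (a perfectly feasible, positive definite covariance, obtained from $\Phi_0\equiv I$). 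So even a successful proof that $\tr(\Lambda_V\Sigma)>0$ would yield nothing from this inequality. The estimate actually needed, and the one the paper's proof is built on, is that the log term grows only \emph{logarithmically} in the scale: writing $Q_k=\rho_k\tilde Q_k$ with $\tilde Q_k=(\rho_k\Psi)^{-1}+G_0^*V_k+V_k'G_0$ uniformly bounded above (since $\Psi$ is coercive and $\|V_k\|=1$), one gets $\int\tr\log Q_k\le m\log\rho_k+O(1)$; then linear-versus-logarithmic growth disposes of every direction with strictly positive linear coefficient.

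Second, the case $\tr(\Lambda_V\Sigma)\le 0$, which you yourself flag as the crux and leave unproven, cannot be closed by the blow-up you hope for: if $Q$ is Lipschitz and becomes singular at an isolated point $\theta_0$, the integral $-\int\tr\log Q$ remains \emph{finite} (compare $\int\log|\theta-\theta_0|\,d\theta<\infty$). This is precisely why the paper handles boundary degeneracy in Lemma~\ref{lem:uniquemin} through the \emph{directional derivative}, via Proposition~\ref{Qprop} ($\int\tr Q^{-1}=\infty$), and not through the value of $\mathbf{J}$. The paper's proof of the present lemma avoids your crux entirely: it writes $\mathbf{J}(X)=\gamma\lambda-\log\lambda-\tr\int\log\tilde Q_\lambda$ with $\lambda=\|X\|$, $\gamma=\tr(H\tilde X+\tilde X'H')$, and argues that if $\gamma>0$ then $\mathbf{J}(X)\to\infty$, contradicting $\mathbf{J}(X)\le r$, while if $\gamma\le 0$ then $\mathbf{J}(X)\to-\infty$, contradicting Lemma~\ref{lem:lowerbound} --- and this lower bound, obtained by weak duality from any feasible $\Phi_0$, is exactly where the hypothesis $\Sigma\in\range(\Gamma)$ enters. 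If you prefer to stay within your own framework, the degenerate case can instead be excluded by a constraint you never exploited: $Q_k\ge 0$ together with $\rho_k\to\infty$ forces $G^*\Lambda_VG\ge 0$ on the whole circle in the limit, and then, since $\Sigma>0$ (the standing assumption of Theorem~\ref{thm1}) admits a coercive feasible $\Phi_0$, $\tr(\Lambda_V\Sigma)=\int\tr\bigl(G^*\Lambda_VG\,\Phi_0\bigr)>0$ unless $\Lambda_V=0$, which is impossible because $\|V\|=1$ and Proposition~\ref{prop:bijection2} is a bijection. Either way, an ingredient beyond what you wrote is required; as it stands, the proposal does not prove the lemma.
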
 

\

\begin{proof}
Let $X\in\bar{\mathcal{X}}_+$  be arbitrary, and define $\lambda :=\|X\|$. We want to show that $X$ cannot remain in the level set \eqref{eq:sublevelset} as  $\lambda\to \infty$. To this end, it is no restriction to assume that $\lambda\geq\lambda_0>0$. 
Next set $\tilde{X}:=\lambda^{-1}X$ and $\tilde{Q}_\lambda:=(\lambda\Psi)^{-1}+G_0^*\tilde{X} + \tilde{X}'G_0$.  Then 
\begin{displaymath}
\mathbf{J}(X)=\gamma\lambda - \log\lambda -\tr\int\log \tilde{Q}_\lambda,
\end{displaymath}
where $\gamma:=\tr(H\tilde{X}+\tilde{X}'H')$,  and where $\tilde{Q}_\lambda$ depends on $\lambda$ but is bounded for $\lambda\geq\lambda_0$. First suppose $\gamma >0$. Then comparing linear and logarithmic growth, $\mathbf{J}(X)\to\infty$ as $\lambda\to\infty$, which contradicts $\mathbf{J}(X)\leq r$. Next, suppose that $\gamma\leq 0$. Then $\mathbf{J}(X)\to-\infty$ as $\lambda\to\infty$, which contradicts Lemma~\ref{lem:lowerbound}, since $\mathbf{J}(X)=\mathbf{J}(L(\Lambda)B)=\mJ(\Lambda)$, where $L(\Lambda)$ is the unique solution of the Lyapunov equation \eqref{eq:M}. Hence the sublevel set \eqref{eq:sublevelset} is bounded as claimed. 
\end{proof}

\

\begin{lemma}\em \label{lem:uniquemin}
The functional $\mathbf{J}:\,\bar{\mathcal{X}}_+\to\mathbb{R}\cup\{\infty\}$ has a unique minimizer $\hat{X}$ in ${\mathcal{X}}_+$.
\end{lemma}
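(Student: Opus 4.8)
The plan is to prove existence and uniqueness of the minimizer $\hat X$ by combining the strict convexity of $\mathbf{J}$ with the compactness supplied by Lemma~\ref{lem:bounded}, and then to rule out the possibility that the minimizer lies on the boundary $\partial\mathcal{X}_+=\bar{\mathcal{X}}_+\setminus\mathcal{X}_+$. First I would establish existence. Since $\mathcal{X}_+$ is nonempty (it contains $X=0$, for which $Q=\Psi^{-1}$ is positive definite by coercivity of $\Psi$), we may pick $r$ large enough that the sublevel set \eqref{eq:sublevelset} is nonempty. By Lemma~\ref{lem:bounded} this sublevel set is bounded, and since it is an intersection of the closed set $\bar{\mathcal{X}}_+$ with a closed sublevel set of the lower-semicontinuous extended functional $\mathbf{J}:\bar{\mathcal{X}}_+\to\mathbb{R}\cup\{\infty\}$, it is compact. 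A lower-semicontinuous function on a nonempty compact set attains its minimum, so a minimizer $\hat X\in\bar{\mathcal{X}}_+$ exists.

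The next step is uniqueness, which follows directly from strict convexity. On the open set $\mathcal{X}_+$ the Hessian \eqref{eq:Hessian} satisfies $\tfrac12\mathcal{H}(X)=\int G_0 Q^{-2}G_0^*>0$, so $\mathbf{J}$ is strictly convex there; I would note that $\mathcal{X}_+$ is convex (as recorded in the excerpt) and that $\mathbf{J}$ extends to a convex function on $\bar{\mathcal{X}}_+$, so any two distinct minimizers would force a strictly lower value at their midpoint, a contradiction. Hence the minimizer is unique.

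The main obstacle, and the step I would spend most care on, is showing that the minimizer $\hat X$ lies in the interior $\mathcal{X}_+$ rather than on the boundary. The difficulty is that as $X$ approaches $\partial\mathcal{X}_+$, the matrix $Q(z)=\Psi^{-1}+G_0^*X+X'G_0$ becomes singular at some point $z=e^{i\theta_0}$ on the unit circle, so $-\log\det Q$ (equivalently $-\tr\log Q$) develops a $+\infty$ contribution; one must verify that the linear term $\tr(HX+X'H')$ cannot cancel this blow-up, so that $\mathbf{J}(X)\to+\infty$ as $X\to\partial\mathcal{X}_+$, forcing the minimizer inward. The plan is to take a boundary point $X_b\in\partial\mathcal{X}_+$ and analyze $\mathbf{J}$ along a segment $X_t=(1-t)X_0+tX_b$ with $X_0\in\mathcal{X}_+$ fixed; along this segment the linear term is affine in $t$ (hence bounded as $t\to1$), while the integral $-\tr\int\log Q_t$ diverges to $+\infty$ because $\det Q_t(e^{i\theta})$ vanishes on the limit and, by the Lipschitz continuity of $\Psi^{-1}$ together with the smoothness of $G_0$, the integrand $-\log\det Q_t$ is not integrably bounded as $Q_t$ degenerates near $\theta_0$. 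This is exactly the point where the Lipschitz hypothesis on $\Psi^{-1}$ in Theorem~\ref{thm1} is used to control the local behavior of the determinant and guarantee the logarithmic singularity genuinely produces an infinite integral.

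Finally, having placed $\hat X$ in the interior, I would record that the first-order stationarity condition $\tfrac12\,\partial\mathbf{J}/\partial X=H-\int Q^{-1}G_0^*=0$ of \eqref{eq:gradient} holds there, recovering \eqref{eq:Xhat}; strict convexity then guarantees this stationary point is the unique global minimizer, completing the identification of $\hat X$ as asserted. The only genuinely delicate estimate is the boundary blow-up in the previous paragraph; the existence and uniqueness arguments are routine consequences of compactness and strict convexity once that coercivity-at-the-boundary fact is in hand.
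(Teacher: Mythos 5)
Your existence and uniqueness steps match the paper (compactness of sublevel sets from Lemma~\ref{lem:bounded}, Weierstrass, strict convexity), though note that the lower semicontinuity of the extension of $\mathbf{J}$ to $\bar{\mathcal{X}}_+$, which you invoke as given, is itself proved in the paper via Fatou's lemma and is part of the content of the argument.

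The genuine gap is in your boundary-exclusion step: the claim that $\mathbf{J}(X)\to+\infty$ as $X\to\partial\mathcal{X}_+$ is false in general, so the "coercivity-at-the-boundary" you plan to establish cannot be established. The obstruction is that logarithmic singularities are integrable. If $\bar{Q}$ is Lipschitz on the circle and becomes singular at $\theta_0$, then $\det\bar{Q}$ vanishes there, but $-\log\det\bar{Q}$ typically grows only like $-c\log|\theta-\theta_0|$, which has a finite integral; hence $\mathbf{J}(\bar{X})$ is finite at such boundary points. A concrete instance is the scalar covariance extension case ($\Psi=1$, $Q$ a positive trigonometric polynomial): at a boundary point $Q$ acquires a zero $|e^{i\theta}-e^{i\theta_0}|^2$ on the circle, yet $\int\log Q$ remains finite, so the functional extends finitely to the boundary. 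What actually degenerates at the boundary is not the value of $\mathbf{J}$ but its derivative: the gradient involves $\int Q^{-1}G_0^*$, and $\tr\{\bar{Q}^{-1}\}$ has a non-integrable singularity of order $1/|\theta-\theta_0|$. This is exactly Proposition~\ref{Qprop} (resting on Lemma~\ref{Mlemma}), and it is there -- not in controlling $\log\det Q$ -- that the Lipschitz hypothesis on $\Psi^{-1}$ is used. The paper's proof accordingly works with directional derivatives: for interior $X$ and boundary $\bar{X}$, setting $X_\lambda=\bar{X}+\lambda(X-\bar{X})$, one shows via monotone convergence that
\begin{displaymath}
\delta\mathbf{J}(X_\lambda,\bar{X}-X)\to+\infty\quad\text{as }\lambda\to 0,
\end{displaymath}
because the pointwise limit $f_0=\tr\{\bar{Q}^{-1}Q\}-n$ has infinite integral by Proposition~\ref{Qprop}. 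Thus $\mathbf{J}$ strictly decreases when one moves from $\bar{X}$ into the interior, which rules out a boundary minimizer even though $\mathbf{J}(\bar{X})$ may be finite. Your segment construction is the right geometric picture, but the quantity that must be shown to blow up is the derivative along the segment, not the functional itself.
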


\

\begin{proof}
We first prove that $\mathbf{J}$, which is continuous on $\mathcal{X}_+$, can be extended as a lower semicontinuous function $\mathbf{J}:\,\bar{\mathcal{X}}_+\to\mathbb{R}\cup\{\infty\}$. To this end, let $(X_k)$ be a sequence converging to $X$ in $L_\infty$ norm, and let $(Q_k)$ and $Q$ be the corresponding functions \eqref{eq:Q(X)}, which are continuous on the compact interval $[-\pi,\pi]$, and hence uniformly continuous. Consequently there is a bound $\kappa$ such that, for $\theta\in[-\pi,\pi]$, $Q(e^{i\theta})\leq \kappa$ and $Q_k(e^{i\theta})\leq \kappa$ for all $k$, and hence, by Fatou's lemma,
\begin{displaymath}
-\int\log \left( \frac{Q}{\kappa} \right)  \leq \liminf_{k \to \infty} -\int \log \left( \frac{Q_k}{\kappa} \right) 
\end{displaymath}
since $Q_k \rightarrow Q$ pointwise. Consequently, $\mathbf{J}(X)\leq\liminf_{k\to\infty}\mathbf{J}(X))$, which shows that  that $\mathbf{J}$, extended to the boundary $\bar{\mathcal{X}}_+$, is lower semicontinuous. Therefore it follows from Lemma~\ref{lem:bounded} that the sublevel set \eqref{eq:sublevelset} is closed and hence bounded. Consequently, by Weierstrass' Theorem, $\mathbf{J}$ has a minimum $\hat{X}$ in $\mathcal{X}$, which must be unique by convexity.  
%
%

 It remains to prove that $\hat{X}$ is not the boundary $\partial\mathcal{X}$. To this end, following \cite{BGuL,SIGEST}, consider the directional derivative
\begin{displaymath}
\begin{split}
\delta\mathbf{J}(X,\delta X)&= \tr\left\{(H\delta X+\delta X'H') \phantom{\int}\right.\\
&\left.\phantom{xxxxx}- \int Q^{-1}\left(G_0^*\delta X +\delta X'G_0\right)\right\}\\
&=\tr\left\{(H\delta X+\delta X'H') - \int Q^{-1}\delta Q\right\}
\end{split}
\end{displaymath}
Now, for any $X\in\mathcal{X}_+$ and $\bar{X}\in \partial\mathcal{X}$, take $\delta X=X-\bar{X}$ and   $X_\lambda = \bar{X} +\lambda\delta X$  and, correspondently,  form $\delta Q=Q(z)-\bar{Q}(z)$ and   $Q_\lambda(z) = \bar{Q}(z) +\lambda\delta Q_\lambda(z)$, where $\det\bar{Q}(e^{i\theta_0})$ for some $\theta_0\in [-\pi,\pi]$. Then
\begin{displaymath}
\delta\mathbf{J}(X_\lambda,-\delta X)= -\tr(H\delta X+\delta X'H') + \int f_\lambda ,
\end{displaymath}
where $f_\lambda$ is the scalar function
\begin{displaymath}
f_\lambda(e^{i\theta})=\tr\{ Q_\lambda(e^{i\theta})^{-1}\delta Q(e^{i\theta})\}.
\end{displaymath}
Taking the derivative with respect to $\lambda$ we have 
\begin{displaymath}
\frac{d}{d\lambda}f_\lambda(e^{i\theta})=\tr\{\delta Q(e^{i\theta})^*Q_\lambda(e^{i\theta})^{-2}\delta Q(e^{i\theta})\}\geq 0,
\end{displaymath}
and consequently $f_\lambda(e^{i\theta})$ is a monotonically nondecreasing function of $\lambda$ for all $\theta\in [-\pi,\pi]$. Therefore, as $\lambda\to 0$,  $f_\lambda$ tends pointwise to 
\begin{displaymath}
\begin{split}
f_0&=\tr\{\bar{Q}^{-1}(Q-\bar{Q})\}=\tr\{ \bar{Q}^{-1}Q -I\}\\
&=\tr\{ \bar{Q}^{-1}Q \} -n.
\end{split}
\end{displaymath}
 If $\int f_\lambda$ would tend to a finite value as $\lambda\to 0$, $(f_\lambda)$ would be a Cauchy sequence in $L^1(-\pi,\pi)$ and hence have a limit in $L^1(-\pi,\pi)$ equal almost everywhere to $f_0$. However,  since there is a $\delta >0$ such that $Q(e^{i\theta})>\delta$, 
 \begin{displaymath}
\int f_0 \geq  \delta\int \tr\left(\bar{Q}^{-1}\right) -n
\end{displaymath}
 which is infinite by Proposition~\ref{Qprop}.
Consequently
\begin{displaymath}
\delta\mathbf{J}(X_\lambda,\bar{X}-X)\to\infty\quad\text{as $\lambda\to 0$},
\end{displaymath}
so there could be no minimum in $\bar{X}$. This concludes the proof. 
\end{proof}

Since the unique minimizer $\hat{X}$ belongs to the interior $\mathcal{X}_+$, the gradient \eqref{eq:gradient} is zero there. This proves \eqref{eq:Xhat}. Then, by \eqref{eq:optimalPhi}, the optimal solution of Problem~\ref{problem1} is given by \eqref{eq:optimalPhiy}. This concludes the proof of Theorem~\ref{thm1}. 

\section{closed-form solution for a special case of prior}\label{sec:closedform}

We now consider the special case where the prior power spectral density is of the particular form
\[
\Psi=(G(z)^*\Lambda_0 G(z))^{-1}.
\]
Then the matrix function $Q$ defined by \eqref{eq:Q} is given by
\begin{displaymath}
Q(z) = G(z)^*(\Lambda_0+\Lambda)G(z),
\end{displaymath}
which must be positive on the unit circle and hence, by Lemma \ref{lemma1} there exists a constant matrix $C$ such that
\begin{equation}
\label{eq:Qnew}
Q(z)=G(z)^*CC'G(z).
\end{equation}

We first change the dual functional \eqref{eq:dual} by adding the constant
$\tr(\Lambda_0\Sigma)$, and compute
\begin{equation}
\label{eq:eq2}
\begin{split}
\tr((\Lambda+\Lambda_0)\Sigma) &= \tr\int (\Lambda+\Lambda_0)G\Phi G^*\\
&= \tr\int Q\Phi\\
&=\tr\int C'G\Phi G^*C\\
&=\tr C'\Sigma C,
\end{split}
\end{equation}
where $\Phi$ satisfies \eqref{eq:moment}. In view of \eqref{eq:Qnew}, the modified functional becomes
\begin{equation}
\label{eq:modJ}
\begin{split}
\tilde{\bJ}(C)&:=\mJ(\Lambda)+\tr \Lambda_0\Sigma\\
&=\tr \left(C'\Sigma C- \int \log G^*CC'G\right)
\end{split}
\end{equation}
which is now a function of $C$. 
Recall the following result from Wiener-Masani-Helson-Lowdenslager.

\

\begin{prop}\em If $F(z)$ is a square outer matrix-valued function, then
\begin{displaymath}
\int \log \det FF^* = \log \det F(0)F(0)^*.
\end{displaymath}
\end{prop}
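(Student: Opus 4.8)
The plan is to reduce the matrix identity to the classical scalar mean-value property of outer functions by passing to determinants. On the unit circle $F^{*}$ denotes the conjugate transpose, so multiplicativity of the determinant together with $\det F^{*}=\overline{\det F}$ gives $\det(FF^{*})=\det F\cdot\overline{\det F}=|\det F|^{2}$, and likewise $\det\big(F(0)F(0)^{*}\big)=|\det F(0)|^{2}$. Taking logarithms, the asserted identity is therefore equivalent to the scalar statement
\begin{displaymath}
\int \log|\det F| = \log|\det F(0)|,
\end{displaymath}
so it suffices to prove this for the scalar function $g:=\det F$.

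Next I would invoke the defining property of a square outer matrix function, as developed by the cited authors: $F$ is outer precisely when $g=\det F$ is a scalar outer function. In particular $g$ is analytic and zero-free on the open unit disc, hence a single-valued branch of $\log g$ is analytic there and $\log|g|=\mathrm{Re}\,\log g$ is harmonic. The scalar claim is then exactly the equality case of Jensen's formula (equivalently, the Szeg\"o theorem) for outer functions. I would give the one-line verification from the canonical representation of an outer function,
\begin{displaymath}
g(z)=c\,\exp\Big(\int \tfrac{e^{i\theta}+z}{e^{i\theta}-z}\,\log|g(e^{i\theta})|\Big),\qquad |c|=1,
\end{displaymath}
by noting that at $z=0$ the Poisson kernel reduces to $1$, so that $|g(0)|=\exp\big(\int\log|g|\big)$ and hence $\log|g(0)|=\int\log|g|$.

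The main subtlety is not the determinant bookkeeping but the equality in the scalar identity. For a general analytic $g$ with $\log|g|\in L^{1}$ one only has Jensen's inequality $\log|g(0)|\le\int\log|g|$, and it is exactly the outer property that promotes this to an equality. Indeed, a nontrivial inner factor $\phi$ of $g$ has $|\phi|=1$ almost everywhere on the circle, so it leaves $\int\log|g|$ unchanged, while $|\phi(0)|<1$ makes $\log|g(0)|$ strictly smaller; this is why the hypothesis that $F$ is outer is indispensable. Once this point is secured, the reduction to the scalar outer-function identity completes the proof.
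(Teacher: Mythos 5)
Your proposal is correct and follows essentially the same route as the paper: reduce to the scalar function $\det F$, note it is outer, and apply the equality case of Jensen's formula at the origin. The paper states this in one line, while you additionally spell out the determinant bookkeeping and verify the scalar equality from the canonical representation of an outer function, but the underlying argument is the same.
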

\begin{proof} The result follows by Jensen's formula after noting that $f=\det F$ is outer (\cite[p. 184]{Ahlfors}).
\end{proof}

\

We now consider once again the functional $\tilde{\bJ}(C)$ and determine stationarity conditions that provide a form of the optimal $C$. First,
\begin{equation}
\nonumber 
\begin{split}
\tilde{\bJ}(C) & =  \tr \left(C'\Sigma C\right)- \log\det(B'CC'B)\\
&=\tr \big(C'\Sigma C- \log(B'CC'B)\big).
\end{split}
\end{equation}
The gradient with respect to $C$ is
\[
\frac{\partial\tilde{\bJ}}{\partial C}=2C'\Sigma - 2(B'C)^{-1}B',
\]
and hence the stationary point is given by $C'\Sigma =(B'C)^{-1}B'$.  This yield the equation
\begin{equation}
\label{eq:verify}
B'CC' = B'\Sigma^{-1}
\end{equation}
for the optimal $C$, and we readily see that
\[
C=\Sigma^{-1} B(B'\Sigma^{-1}B)^{-1/2}
\]
satisfies \eqref{eq:verify}. Thus, the optimal $Q$ is
\[
\hat Q(z)=G(z)^*\Sigma^{-1}B(B'\Sigma^{-1} B)^{-1}B'\Sigma^{-1}G(z),
\]
and therefore
\[
\hat\Phi(z)=(G(z)^*\Sigma^{-1}B(B'\Sigma^{-1} B)^{-1}B'\Sigma^{-1}G(z))^{-1}.
\]
This concludes the proof of Theorem~\ref{thm2}. 

%
%

\section{Conclusions}\label{sec:conclusions}

The topic of the paper is to construct power spectral densities that are consistent with specified moments
and are closest to a prior in a suitable sense. The spirit of the work is similar to a long line of contributions going
back to \cite{burg}, including a series of papers \cite{BGuL}--\cite{KLR} where the emphasis was in identifying and parametrizing power spectra of minimal complexity (i.e., dimensionality of modeling filters). A key tool in these earlier works was a choice of entropy functional that allowed parametrizing solutions via selection of a suitable prior power spectrum. The moment constraints were cast in the form of the state covariance of an input-to-state filter.

In departure from this early work, Ferrante, Masiero and Pavon \cite{FMP} proposed to use the KL divergence between Gaussian probability laws -- a formulation which is quite natural from a probabilistic standpoint. The KL divergence between Gaussian probability laws coincides with the Itakura Saito distance between their respective power spectral densities, and thus, the problem turns out to be equivalent to one studied by Enqvist and Karlsson \cite{EK} in the context of scalar processes.
The purpose of the current work is to present a simplified alternative optimization procedure which is based on a detailed analysis of the geometry of input-to-state filters and related moment problems. Indeed, the power spectral densities are now parametrized more conveniently by a non-redundant coefficient matrix ($X$ in Theorem \ref{thm1}) containing minimal number of parameters that are necessary.
Sections \ref{sec:geometry} and \ref{mainresults}, as well as the proofs later in the paper contain the main contributions.

\section{Appendix}

\subsection{Behavior of\/ $\mathbf{J}$ on the boundary}

\begin{lemma}\em \label{Mlemma}
Let $\theta\mapsto M(e^{i\theta})$ be a matrix-valued function with Lipschitz-continuous components, and suppose that $M(e^{i\theta})$ is positive semidefinite for all $\theta$ and identically zero for $\theta=\theta_0$. Then
\begin{displaymath}
\int_{-\pi}^\pi \tr\{M^{-1}(e^{i\theta})\}\frac{d\theta}{2\pi} =\infty,
\end{displaymath}
where $M^{-1}$ is defined to have infinite value on any subset  of $[-\pi,\pi]$ where it is identically zero.
\end{lemma}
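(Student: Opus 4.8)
The plan is to exploit the local behaviour of $M$ near $\theta_0$: Lipschitz continuity together with $M(e^{i\theta_0})=0$ forces the eigenvalues of $M$ to vanish at least linearly in $|\theta-\theta_0|$, which makes $\tr\{M^{-1}\}$ blow up at least like $|\theta-\theta_0|^{-1}$ and thereby produces a non-integrable singularity at $\theta_0$.

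First I would establish a quantitative smallness bound on $M$ near $\theta_0$. Since each component $M_{ij}$ is Lipschitz and $M(e^{i\theta_0})=0$, one has $|M_{ij}(e^{i\theta})|\leq L_{ij}|\theta-\theta_0|$; bounding the spectral norm by the Frobenius norm then yields $\|M(e^{i\theta})\|\leq L|\theta-\theta_0|$ for a constant $L$ depending only on the Lipschitz constants and the matrix size. Because $M(e^{i\theta})$ is positive semidefinite (hence Hermitian), its spectral norm equals its largest eigenvalue $\lambda_{\max}(\theta)$, so $\lambda_{\max}(\theta)\leq L|\theta-\theta_0|$.

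Next I would record the pointwise lower bound $\tr\{M^{-1}(e^{i\theta})\}\geq\lambda_{\max}(\theta)^{-1}$, which holds uniformly under the stated convention and sidesteps any casework on singularity of $M$: if $M(e^{i\theta})$ is nonsingular then $\tr(M^{-1})=\sum_j\lambda_j^{-1}\geq\lambda_{\max}^{-1}$ since all eigenvalues are positive, while if $M(e^{i\theta})$ is singular the left side is $+\infty$ by convention and the inequality is trivial. Combining this with the eigenvalue bound gives $\tr\{M^{-1}(e^{i\theta})\}\geq 1/(L|\theta-\theta_0|)$ for every $\theta\neq\theta_0$.

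Finally I would integrate over a small arc about $\theta_0$:
\[
\int_{-\pi}^\pi\tr\{M^{-1}\}\,\frac{d\theta}{2\pi}\;\geq\;\frac{1}{2\pi L}\int_{|\theta-\theta_0|\leq\varepsilon}\frac{d\theta}{|\theta-\theta_0|}\;=\;\infty,
\]
since $|\theta-\theta_0|^{-1}$ has a logarithmically divergent integral at $\theta_0$, which proves the claim. The argument is elementary throughout; the only point needing care is packaging the componentwise Lipschitz estimates into the single spectral-norm (equivalently, largest-eigenvalue) bound, after which the lower bound $\tr(M^{-1})\geq\lambda_{\max}^{-1}$ disposes of both the divergence and the singularity convention in one stroke.
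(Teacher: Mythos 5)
Your proof is correct, and it follows the same overall strategy as the paper: Lipschitz continuity plus $M(e^{i\theta_0})=0$ forces $M$ to vanish at least linearly, hence $\tr\{M^{-1}\}$ grows at least like $|\theta-\theta_0|^{-1}$, whose integral diverges logarithmically. Where you differ is in how the pointwise lower bound on $\tr\{M^{-1}\}$ is extracted. The paper works componentwise: writing $N:=M^{-1}$, it uses the diagonal identities $\sum_\ell m_{k\ell}n_{\ell k}=1$ together with the bound $|m_{k\ell}|\leq K|\theta-\theta_0|$ and positive semidefiniteness of $N$ to conclude that \emph{some} diagonal entry satisfies $n_{kk}\geq 1/(L|\theta-\theta_0|)$; it also needs a preliminary reduction (``without loss of generality $\theta_0$ is an isolated zero''). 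Your route replaces this with the spectral inequalities $\lambda_{\max}(M)\leq\|M\|_{\rm F}\leq L|\theta-\theta_0|$ and $\tr\{M^{-1}\}=\sum_j\lambda_j^{-1}\geq\lambda_{\max}^{-1}$, which is cleaner on two counts: it avoids the slightly delicate ``there exists a $k$'' step (the paper's argument really needs the PSD-based domination of off-diagonal entries by diagonal ones, which it leaves implicit), and your bound holds uniformly for all $\theta\neq\theta_0$ including points where $M$ is singular but nonzero, so no reduction to an isolated zero or casework on the singularity convention is required. Both proofs are elementary and of comparable length; yours is the tighter packaging of the same underlying idea.
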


\

\begin{proof}
Without loss of generality we can assume that $M(e^{i\theta})=0$ in  an isolated point $\theta_0$. By assumption, we can choose a common Lipschitz constant $K$ and an $\varepsilon >0$ such that the components $m_{k\ell}(e^{i\theta})$ of $M$ have the bounds
\begin{displaymath}
\left| m_{k\ell}(e^{i\theta}) \right| \leq K |\theta -\theta_0|  
\end{displaymath}
for $ |\theta -\theta_0| <\varepsilon$.
If $N(e^{i\theta}):=M^{-1}(e^{i\theta})$, its components satisfy
\begin{displaymath}
\sum_\ell m_{k\ell}(e^{i\theta})n_{\ell k}(e^{i\theta})=1 \quad \text{for all $\theta$ and $k$},
\end{displaymath}
which then implies that 
\begin{displaymath}
\begin{split}
&\left|\sum_\ell n_{\ell k}(e^{i\theta})\right|K|\theta -\theta_0|\geq 1 \quad \\
&\phantom{xxxxx}\text{for all $\theta\in (\theta_0-\varepsilon,\theta_0+\varepsilon)$ and $k$}.
\end{split}
\end{displaymath}
Consequently, since $N(e^{i\theta})\geq 0$, there must be a $k$ and an $L>0$ such that 
\begin{displaymath}
n_{kk}(e^{i\theta})\geq \frac{1}{L|\theta -\theta_0|}, \quad\text{for all $\theta\in (\theta_0-\varepsilon,\theta_0+\varepsilon)$},
\end{displaymath}
and therefore
\begin{displaymath}
\int_{-\pi}^\pi \tr\{M^{-1}(e^{i\theta})\}\frac{d\theta}{2\pi}\geq \frac{1}{L}\int_{\theta_0-\varepsilon}^{\theta_0+\varepsilon}\frac{1}{|\theta -\theta_0|}\frac{d\theta}{2\pi} =\infty,
\end{displaymath}
as claimed. 
\end{proof}

\

\begin{prop}\em\label{Qprop}
Let $\theta\mapsto Q(e^{i\theta})$ be a matrix-valued function with Lipschitz-continuous components, and suppose that $Q(e^{i\theta})$ is positive semidefinite for all $\theta$ and singular for $\theta=\theta_0$. Then
\begin{displaymath}
\int_{-\pi}^\pi \tr\{Q^{-1}(e^{i\theta})\}\frac{d\theta}{2\pi} =\infty.
\end{displaymath}
\end{prop}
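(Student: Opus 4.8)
The plan is to reduce the matrix statement to the scalar situation already settled in Lemma~\ref{Mlemma}. Since $Q(e^{i\theta_0})$ is positive semidefinite and singular, its kernel is nontrivial, so I would fix a unit vector $v$ with $Q(e^{i\theta_0})v=0$, equivalently $v^*Q(e^{i\theta_0})v=0$. Setting $q(\theta):=v^*Q(e^{i\theta})v$ produces a scalar function that is Lipschitz continuous (being a fixed linear combination of the Lipschitz components of $Q$), nonnegative for every $\theta$ (because $Q(e^{i\theta})\succeq 0$), and equal to zero at $\theta_0$. Thus $q$ meets the hypotheses of Lemma~\ref{Mlemma} in the scalar $1\times 1$ case, and that lemma gives $\int_{-\pi}^\pi q(\theta)^{-1}\,\frac{d\theta}{2\pi}=\infty$.

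The remaining ingredient is a pointwise lower bound of $\tr Q^{-1}$ by $q^{-1}$, valid wherever $Q(e^{i\theta})$ is invertible. For this I would combine two elementary facts. First, $\tr Q^{-1}$ is the sum of the (positive) eigenvalues of $Q^{-1}$, hence dominates the largest one, so $\tr Q^{-1}\ge v^*Q^{-1}v$ for the unit vector $v$. Second, applying the Cauchy--Schwarz inequality to the factorization $v^*v=(Q^{1/2}v)^*(Q^{-1/2}v)$ yields $1=(v^*v)^2\le (v^*Qv)(v^*Q^{-1}v)$, that is $v^*Q^{-1}v\ge (v^*Qv)^{-1}=q(\theta)^{-1}$. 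Chaining these gives $\tr Q^{-1}(e^{i\theta})\ge q(\theta)^{-1}$ for every $\theta\neq\theta_0$, and combining with the divergence of $\int q^{-1}$ finishes the argument. Alternatively one can bypass the explicit appeal to Lemma~\ref{Mlemma} and estimate directly: Lipschitz continuity gives $q(\theta)\le K|\theta-\theta_0|$ near $\theta_0$, so $\tr Q^{-1}\ge 1/(K|\theta-\theta_0|)$ and the integral diverges exactly as in the proof of that lemma.

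The main obstacle, and the only genuine difference from Lemma~\ref{Mlemma}, is that $Q$ need not vanish entirely at $\theta_0$; only a single eigenvalue collapses there. The crux is therefore to isolate that singular direction by testing against a kernel vector and to transfer the blow-up of $Q^{-1}$ along that direction into a lower bound for the full trace. The inequality $\tr Q^{-1}\ge (v^*Qv)^{-1}$ is precisely what encodes this, and once it is in place the divergence is inherited verbatim from the scalar estimate. A minor bookkeeping point, handled as in Lemma~\ref{Mlemma}, is to first reduce to the case where $\theta_0$ is an isolated zero of $q$ in the interior of $[-\pi,\pi]$.
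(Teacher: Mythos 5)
Your proof is correct, but it takes a genuinely different route from the paper's. The paper handles the difficulty that only part of $Q$ degenerates at $\theta_0$ by applying a constant unitary transformation to put $Q$ in block form with $Q_1(e^{i\theta_0})=Q_2(e^{i\theta_0})=0$ and $Q_3(e^{i\theta_0})>0$, and then observing that the leading block of $Q^{-1}$ is the inverse of the Schur complement $M=Q_1-Q_2Q_3^{-1}Q_2^*$, which is positive semidefinite, Lipschitz, and vanishes \emph{identically} at $\theta_0$, so that the full matrix version of Lemma~\ref{Mlemma} applies. You instead isolate the degenerating direction with a single kernel vector $v$ and transfer the blow-up to the trace via the chain $\tr Q^{-1}\geq v^*Q^{-1}v\geq (v^*Qv)^{-1}$, the second inequality being Cauchy--Schwarz applied to $v^*v=(Q^{1/2}v)^*(Q^{-1/2}v)$ wherever $Q$ is invertible (off that set $\tr Q^{-1}$ is infinite by convention, so the bound holds trivially). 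This reduces everything to the scalar estimate $q(\theta)=v^*Q(e^{i\theta})v\leq K|\theta-\theta_0|$, i.e.\ to the $1\times 1$ instance of Lemma~\ref{Mlemma}, or even bypasses that lemma entirely as you note. Your argument is more elementary: it avoids the unitary change of basis, the block-inversion formula, and in particular the claim---asserted but not verified in the paper---that the Schur complement has Lipschitz-continuous components near $\theta_0$ (which requires $Q_3$ to stay invertible in a neighborhood and a product/inverse argument for Lipschitz functions). What the paper's heavier machinery buys is stronger information, namely that an entire block of $Q^{-1}$ blows up like $1/|\theta-\theta_0|$; but for the stated conclusion, divergence of the trace integral, your one-dimensional reduction is sufficient and cleaner.
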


\

\begin{proof}
After applying a constant unitary transformation we can write $Q$ on the form
\begin{displaymath}
Q=\begin{bmatrix} Q_1&Q_2\\Q_2^*&Q_3\end{bmatrix},
\end{displaymath}
where $Q_1(e^{i\theta_0})=Q_2(e^{i\theta_0})=0$ and $Q_3(e^{i\theta_0})>0$. Then
\begin{displaymath}
Q^{-1}=\begin{bmatrix} \left[Q_1-Q_2Q_3^{-1}Q_2^*\right]^{-1}&*\\ *&*\end{bmatrix},
\end{displaymath}
where the Schur complement \[M:=Q_1-Q_2Q_3^{-1}Q_2^*\] is positive semidefinite and has Lipschitz-continuous components. Then the statement of the proposition  follows from Lemma~\ref{Mlemma}. 
\end{proof}

\subsection{Co-invariant subspaces}
Let $\cH_2^m$ represent {\em row} vector-valued functions
in the Hardy space of square integrable functions on the circle which have an analytic continuation in the interior of the unit disc -- a standard notation $\cH_2$ or $\cH_2(\mD)$. The forward shift $S$ amounts to multiplication by
$\la$. The backward shift is precisely its adjoint,
\[
S^*:\cH_2^{\ell}\to \cH_2^{\ell}:x(\la)\mapsto \bPi_{\cH_2}\la^{-1} x(\la).
\]
Subspaces which are invariant under $S^*$ are those that are orthogonal to invariant subspaces
of the forward shift $S$, i.e., of the form
\[
\cK:= \cH_2^{1\times m}\ominus \cH_2^{1\times m} V(\la)
\]
with $V(\la)$ an inner (matrix-valued) function, and they are often referred to simply as
``co-invariant subspaces''. 
The orthogonal projection onto $\cK$ is
\[
\begin{split}
{\bf \Pi}_\cK &: \cH_2^{1\times m} \to \cK\\ & x(\la) \mapsto
\left( {\bf \Pi}_{(\cH_2^{1\times m})^\perp} x(\la)V(\la)^*
\right) V(\la).
\end{split}
\]
To see this, note that since $V(\la)$ is inner,
${\bf \Pi}_\cK$ defined above
is idempotant and Hermitian---hence a projection.
It is easy to verify that its kernel is precisely $\cH_2^{1\times m}V(\la)$
and therefore ${\bf \Pi}_\cK$ is the orthogonal projection onto $\cK$ as claimed.
\vspace*{.1in}

Let $A\in\mC^{n\times n}$ with eigenvalues in $\mD$, $B\in\mC^{n\times m}$ with $(A,B)$ controllable. Without loss in generality we can always normalize $(A,B)$ so that the corresponding controllability Grammian is the identity $I$; when this is true
\[
AA^*+BB^*=I
\]
and $[A,\,B]$ can be competed to a unitary matrix
\[
\left[\begin{matrix}A&B\\C&D\end{matrix}\right].
\]
It follows that
\[
V(\la)=D+zC(I-zA)^{-1}B
\]
is an inner matrix-valued function, i.e., it is analytic in $\mD$ and
$VV^*=V^*V=I$, where $V^*:=V(\la):=V^*(\la^{-1})$. Now, consider
\[
G(\la):=(I-\la A)^{-1}B
\]
and the co-invariant subspace $\cK$ as noted above. The following statement is known (see \cite[Proposition 4]{Georgiou2002}).

\

\begin{prop}\em
The rows of $G(\la)$ form a basis for $\cK$.

\

\end{prop}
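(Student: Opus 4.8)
The plan is to show that the rows of $G(\la)=(I-\la A)^{-1}B$ span the co-invariant subspace $\cK=\cH_2^{1\times m}\ominus \cH_2^{1\times m}V(\la)$ and that they are linearly independent, so that (since $\dim\cK=n$) they form a basis. First I would verify that each row of $G$ actually lies in $\cH_2^{1\times m}$: because the eigenvalues of $A$ lie in $\mD$, the function $(I-\la A)^{-1}=I+\la A+\la^2A^2+\cdots$ is analytic on the closed disc and square-integrable on the circle, so every row of $G$ belongs to $\cH_2^{1\times m}$. Next I would compute the dimension of the span. Since $(A,B)$ is controllable, the $n$ rows of the power-series coefficients $B, AB, A^2B,\dots$ collectively span $\bR^n$ (equivalently $\mC^n$), which forces the $n$ rows of $G(\la)$, viewed as elements of the function space, to be linearly independent over $\mC$; a nontrivial constant left-combination $v^*G(\la)\equiv 0$ would give $v^*A^kB=0$ for all $k$, contradicting controllability. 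Hence $\dim(\operatorname{span of rows of }G)=n$.

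The crux is to identify this $n$-dimensional span with $\cK$. The natural route is to show the two containments, or more efficiently to show that the rows of $G$ are orthogonal to $\cH_2^{1\times m}V(\la)$ and then match dimensions. For orthogonality I would use the defining unitary embedding $\left[\begin{smallmatrix}A&B\\ C&D\end{smallmatrix}\right]$ and the realization $V(\la)=D+\la C(I-\la A)^{-1}B$. The key algebraic identity is the intertwining relation between $G$ and $V$ coming from this realization: one checks, using $AA^*+BB^*=I$, $AC^*+BD^*=0$, $CC^*+DD^*=I$, that $G(\la)$ postmultiplied by $V(\la)^*$ has negative Fourier support, i.e.\ $G(\la)V(\la)^*\in (\cH_2)^\perp$ componentwise. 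This is exactly the statement that the rows of $G$ are annihilated by the projection onto $\cH_2^{1\times m}V(\la)$, hence lie in $\cK$. Since $\dim\cK=n$ (the McMillan degree of the inner function $V$, equal to the size of the minimal realization $(A,B,C,D)$), matching the $n$-dimensional span of the rows of $G$ with $\cK$ completes the argument.

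I would organize the proof as follows: (i) rows of $G$ are in $\cH_2^{1\times m}$; (ii) they are linearly independent, of number $n$, by controllability; (iii) $\dim\cK=n$ by the degree of $V$; (iv) the rows of $G$ are orthogonal to $\cH_2^{1\times m}V$ via the unitarity relations, so they lie in $\cK$; (v) conclude by dimension count. The main obstacle I anticipate is step (iv), the orthogonality computation. It requires carefully expanding $G(\la)V(\la)^*$ using the coisometry/unitarity constraints on $[A,B,C,D]$ and showing the result has only strictly-negative Fourier modes; the bookkeeping with $V^*(\la)=V^*(\la^{-1})$ and the cancellation of the nonnegative-frequency part is where the essential content lies. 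Everything else is either a direct consequence of controllability (linear independence) or a standard fact about the degree of a minimal inner function (the dimension of $\cK$). Since the statement is quoted from \cite[Proposition 4]{Georgiou2002}, I would also note that one may alternatively simply cite that reference, but the self-contained argument above via the unitary completion is the cleanest way to see why the result holds.
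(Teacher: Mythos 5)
Your proof is correct, but it is organized differently from the paper's. The paper argues top--down: it computes the orthogonal projection $\mathbf{\Pi}_\cK$ applied to a generic element $x_0+x_1\la+\cdots$ of $\cH_2^{1\times m}$, obtaining that every element of $\cK$ has the form $v(\la I-A^*)^{-1}C^*V(\la)$, and then invokes the identity $G(\la)=(\la I-A^*)^{-1}C^*V(\la)$ (cited as Eq.\ (36) of the reference) to conclude that the rows of $G$ span $\cK$; linear independence via controllability is then the same as your step (ii). You argue bottom--up: show the rows of $G$ lie in $\cK$ by proving $G(\la)V(\la)^*\in(\cH_2)^\perp$, and finish with the dimension count $\dim\cK=n$. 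Your key step (iv) does go through: using $BD^*=-AC^*$ and $BB^*=I-AA^*$ one gets
\begin{displaymath}
G(\la)V(\la)^*=(I-\la A)^{-1}\bigl[-AC^*+(I-AA^*)(\la I-A^*)^{-1}C^*\bigr]=(\la I-A^*)^{-1}C^*=\la^{-1}C^*+\la^{-2}A^*C^*+\cdots,
\end{displaymath}
which has only strictly negative Fourier modes; note this is exactly the paper's identity (36) in disguise, so the algebraic core of the two proofs is the same. What differs is the surrounding scaffolding: the paper's projection computation delivers spanning directly and never needs to know $\dim\cK$ in advance, whereas your route leans on the standard model-space fact $\dim\cK=\deg V=n$, which requires the realization $(A,B,C,D)$ of $V$ to be minimal. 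You assert minimality without proof; it does hold here, and cheaply -- unitarity of the embedding gives $A^*A+C^*C=I$ as well as $AA^*+BB^*=I$, so both the controllability and observability Gramians equal $I$ -- but you should say this, since observability of $(C,A)$ is not among your hypotheses. With that one sentence added, your argument is complete, and it is arguably more self-contained than the paper's, which outsources both the projection image and the identity (36) to the cited reference.
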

\vspace*{.1in}

\begin{proof} The proof is again from \cite[Proposition 4]{Georgiou2002}.
We first claim that any element in $\cK$ is of the form
\begin{equation}
v(\la I - A^*)^{-1}C^*V(\la) \label{formofelements}
\end{equation}
where $v\in\mC^{1\times n}$. To see this note that
\begin{eqnarray*}
{\bf \Pi}_\cK&:& x_0+x_1\la+\ldots \mapsto\\
& &\hspace*{-.3cm} \big[ {\bf \Pi}_{(\cH_2^{1\times m})^\perp}
(x_0+x_1\la+\ldots)\\
&& \times(D^*+\la^{-1}B^*C^*+\ldots)\big]V(\la)\\
& &\hspace*{-.3cm}= v(\la^{-1}C^*+\la^{-2}A^*C^*+\ldots)V(\la)
\end{eqnarray*}
where $v=x_0B^*+x_1B^*A^*+\ldots$.
Next, it can be shown \cite[Eq.\ (36)]{Georgiou2002} that
\begin{equation}
G(\la)=(\la I - A^*)^{-1}C^* V(\la). \label{Gfraction}
\end{equation}
In view of (\ref{formofelements}), the rows of $G(\la)$ span $\cK$.
Finally, if $vG(\la)=0$ for some $v\in\mC^{1\times n}$,
then necessarily $v=0$ because $(A,B)$ is controllable.
Hence the rows of $G(\la)$ are linearly independent and form a basis
for $\cK$ as claimed.
\end{proof}

\

\begin{lemma}\em \label{lemma1} Let $\Lambda$ be a Hermitian $n\times n$-matrix such that
\[Q(\la):=G(\la)^*\Lambda G(\la)>0
\]
for $\la=e^{i\theta}$, and $\theta\in[0,2\pi)$.
There exists $\Lambda_o=C^*_oC_o$ with $C_o\in\mC^{m\times n}$ such that
\[
G(\la)^*\Lambda G(\la)=G(\la)^*\Lambda_o G(\la)
\]
and $C_oG(\la)$ is outer (i.e., minimum phase, that is, stable and stably invertible).
\end{lemma}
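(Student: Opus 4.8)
The plan is to turn the statement into a spectral-factorization problem and then invoke the basis property of the rows of $G$ just established. Since $Q(\la)=G(\la)^*\Lambda G(\la)$ is rational and coercive on the unit circle, matrix spectral factorization (Wiener--Masani) furnishes an outer (minimum-phase) $m\times m$ factor $F$, with $F,F^{-1}\in\cH_\infty$ and
\[
Q(\la)=F(\la)^*F(\la).
\]
If I can show that $F=C_oG$ for a constant $C_o\in\mC^{m\times n}$, then setting $\Lambda_o:=C_o^*C_o$ (Hermitian, positive semidefinite) gives at once $G^*\Lambda_o G=(C_oG)^*(C_oG)=F^*F=Q=G^*\Lambda G$, which is the first claimed identity, while $C_oG=F$ is outer by construction, which is the second. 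So everything reduces to producing $C_o$, and by the preceding Proposition (the rows of $G$ form a basis of $\cK$) it suffices to prove that every row of $F$ lies in the co-invariant subspace $\cK$: each such row is then a unique constant combination of the rows of $G$, and stacking these combinations yields $C_o$.

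To show that the rows of $F$ lie in $\cK$, recall that a row $f\in\cH_2^{1\times m}$ belongs to $\cK=\cH_2^{1\times m}\ominus\cH_2^{1\times m}V$ exactly when $\bPi_{\cH_2}(fV^*)=0$, i.e.\ when $fV^*$ has only strictly negative Fourier coefficients. The first membership $f\in\cH_2^{1\times m}$ is automatic because $F$ is analytic in $\mD$. For the second I would use \eqref{Gfraction}, which gives $G=P^*V$ with $P(\la):=\la C(I-\la A)^{-1}$ analytic in $\mD$ and $P(0)=0$; hence $Q=V^*(P\Lambda P^*)V$ and, with $\Xi:=FV^*$,
\[
\Xi^*\Xi=V\,Q\,V^*=P\Lambda P^*.
\]
Here $P$ has Fourier support in the strictly positive powers of $\la$, while outerness of $F$ makes $\Xi^{-1}=VF^{-1}$ analytic in $\mD$; tracking poles, the only poles of $F$ in $|\la|>1$ coincide with the poles of $Q$ there, which are those of $G$ (namely the reciprocals of the eigenvalues of $A$), so $F$ inherits the pole structure of $G$. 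This is precisely the defining feature of $\cK$, and it forces $\Xi=FV^*$ to be purely co-analytic, as required.

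I expect this last point---establishing that $FV^*$ is purely co-analytic, equivalently that spectral factorization introduces no analytic component outside the $n$-dimensional model space $\cK$, equivalently that the outer factor shares the realization $(A,B)$ of $G$---to be the main obstacle. The existence of the outer factor is classical and the final assembly through the basis Proposition is routine; the delicate work is the degree/pole bookkeeping, where the coercivity of $Q$ (guaranteeing $\int\log\det Q>-\infty$ and hence a genuine outer factor) and the inner normalization $AA^*+BB^*=I$ keep the accounting clean. An alternative route to the same conclusion is a direct McMillan-degree count, matching the number of poles of $Q$ in $|\la|>1$ to $n=\dim\cK$ and using controllability of $(A,B)$ to identify the image of $c\mapsto c(I-\la A)^{-1}B$ with all of $\cK$.
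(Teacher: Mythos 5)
Your reduction is exactly the paper's: factor $Q=F^*F$ with $F$ outer, show the rows of $F$ lie in $\cK$, and invoke the basis Proposition to get $F=C_oG$ for a constant $C_o\in\mC^{m\times n}$, whence $\Lambda_o=C_o^*C_o$. That frame is correct. But the step you yourself flag as ``the main obstacle'' --- that $FV^*$ is purely co-analytic, i.e.\ that the rows of the outer factor lie in $\cK$ --- \emph{is} the lemma, and your pole-tracking sketch cannot close it. Membership in $\cK$ is not a pole-location condition. Concretely, take the trigonometric-moment data \eqref{eq:ABtoeplitz} (scalar case, $A$ nilpotent): there $G(\la)=(I-\la A)^{-1}B$ has rows $1,\la,\dots,\la^{n-1}$, $V(\la)=\la^n$, and $\cK$ is the space of polynomials of degree at most $n-1$. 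Every candidate $F$ is then a polynomial with \emph{no} finite poles at all, so ``$F$ inherits the pole structure of $G$'' is vacuously true and decides nothing; the function $\la^n$ has the same (empty) pole set as the elements of $\cK$ yet is not in $\cK$. What is really needed is a degree bound at infinity (equivalently, strict co-analyticity of $FV^*$), and your proposal asserts it rather than proves it; the ``alternative McMillan-degree count'' you mention is precisely the unproved part, and would also have to rule out pole-zero cancellations in forming $Q=G^*\Lambda G$.

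The paper closes this gap with a short Fourier-coefficient argument in which outerness enters \emph{multiplicatively}, not through poles. Since the rows of $G$ lie in $\cK$, the function $GV^*$ is orthogonal to $\cH_2^{1\times m}$; hence $VG^*$ is analytic with vanishing constant term, and so $Va^*a=VG^*\Lambda G$ is analytic with vanishing constant term (here $a$ is your $F$). Now use outerness: $a^{-1}$ is analytic, so $Va^*=(Va^*a)\,a^{-1}$ is again analytic with vanishing constant term. Conjugating, $aV^*$ has only strictly negative Fourier coefficients, i.e.\ the rows of $a$ are orthogonal to $\cH_2^{1\times m}V$ and thus lie in $\cK$; the basis Proposition then yields $a=CG$. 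If you replace your pole-bookkeeping paragraph with this argument (your identity $\Xi^*\Xi=P\Lambda P^*$ is not needed), the rest of your proposal goes through verbatim.
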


\

\begin{proof}
Since $Q(\la)$ is Hermitian and positive definite on the unit circle of the complex plane, it can be factored as
\[
Q(\la)=a(\la)^*a(\la)
\]
with $a(\la)$ outer. But $V(\la)G(\la)^*\Lambda G(\la)=V(\la)a(\la)^*a(\la)$ has all its elements in $\cH_2$, since already $V(\la)G(\la)^*$ does.  Since $a(\la)$ is outer, $V(\la)a(\la)^*$ is in $\cH_2$ as well. Now, note that $G(\la)V(\la)^*$ is orthogonal to $\cH_2$. Therefore, the zeroth term of $G(\la)V(\la)^*$ vanishes. It follows that $V(\la)a(\la)^*a(\la)$ which has only positive power of $\la$ has no $0$th term either. Therefore, $V(\la)a(\la)^*$ has only positive powers of $\la$ and no $0$th term. So, finally, we conclude that all elements of $a(\la)V(\la)^*$ are orthogonal to $\cH_2$ and, therefore, the rows of $a(\la)$ are in $\cK$. Thus, there exists a $C\in\mC^{m\times n}$ such that
\[
a(\la)=CG(\la).
\]
This completes the proof.
\end{proof}

\end{document}